\titleformat*{\section}{\center \bf}
\titleformat*{\subsection}{\raggedright \bf}
\newtheorem{theorem}{Theorem}
\newtheorem{lemma}[theorem]{Lemma}
\newenvironment{proof}[1][Proof]{\begin{trivlist}
\item[\hskip \labelsep {\bfseries #1}]}{\end{trivlist}}
\newcommand{\qed}{\nobreak \ifvmode \relax \else
      \ifdim\lastskip<1.5em \hskip-\lastskip
      \hskip1.5em plus0em minus0.5em \fi \nobreak
      \vrule height0.75em width0.5em depth0.25em\fi}
\title{Simplified Compute-and-Forward and Its Performance Analysis}
\author{
Mohsen Hejazi\\
\small Department of Electrical Engineering\\[-0.8ex]
\small Sharif University of Technology\\
\small Tehran, Iran\\
\small \texttt{mhejazi@ee.sharif.edu}\\
\and
Masoumeh Nasiri-Kenari\\
\small Department of Electrical Engineering\\[-0.8ex]
\small Sharif University of Technology\\
\small Tehran, Iran\\
\small \texttt{mnasiri@sharif.edu}\\
}
\date{}
\begin{document}
\maketitle

\begin{abstract}
\textbf{
The compute-and-forward (CMF) method has shown a great promise as an innovative approach to exploit interference toward achieving higher network throughput. The CMF was primarily introduced by means of information theory tools. While there have been some recent works discussing different aspects of efficient and practical implementation of CMF, there are still some issues that are not covered. In this paper, we first introduce a method to decrease the implementation complexity of the CMF method. We then evaluate the exact outage probability of our proposed simplified CMF scheme, and hereby provide an upper bound on the outage probability of the optimum CMF in all SNR values, and a close approximation of its outage probability in low SNR regimes.
We also evaluate the effect of the channel estimation error (CEE) on the performance of both optimum and our proposed simplified CMF by simulations. Our simulation results indicate that the proposed method is more robust against CEE than the optimum CMF method for the examples considered.}
\par
\textbf{
Index Term- compute-and-forward, outage probability, performance analysis, channel estimation error, wireless relay network.}
\end{abstract}


\section{INTRODUCTION} \label{sec:Intro}
The CMF method, proposed by Nazer and Gastpar~\cite{N1}, enables exploiting, rather than combating, the multiple access interference in a wireless relay network, and thus results in improved network throughput~\cite{N2}. In this method, relays, instead of recovering single messages, attempt to reliably recover and pass an integer linear combination of transmitted messages, called an equation, to the destination. By receiving enough equations, the destination can solve the linear equation system to recover desired messages.\par
In recent years, the CMF method as a promising approach, has received a lot of attention. There have been an increasing number of works on theoretical aspects of CMF method. In~\cite{E1}, by analyzing asymptotic behavior, the number of degrees of freedom of CMF method is derived. The authors in~\cite{E3} generalize the CMF method to the case of multiple antenna sources and relays, and compute the corresponding achievable rate based on an optimization problem. Successive recovering of messages in relays for CMF method is proposed in~\cite{E2}. The idea is similar to successive interference cancellation (SIC) technique and leads to a higher recovering rate at each relay. In~\cite{ISI}, the impact of inter-symbol interference on the CMF method, for the bi-directional relay case, is evaluated. \par
Most of the analytical studies, including aforementioned works, consider the CMF from an information theory viewpoint which involves finding an achievable rate or a capacity region~\cite{MAC, hop, SUM}. On the other hand, some works concentrate on the practical aspects and implementation of the CMF~\cite{Flatness}. Efficient design of proper codes and lattices are discussed in~\cite{E5, E6, E7, E9}. Reference~\cite{E10} considers the problem of lattice decoding and proposes some practical and efficient approaches to this end. Finding an optimum integer Equation Coefficient Vector (ECV), coefficients of recovered combination at a relay, is computationally complex. This issue in not addressed considerably and will be investigated in this paper.\par
The authors in~\cite{N1} and~\cite{MWRC} have calculated the outage performance of the CMF method in the cases of three-transmitter multiple-access channel (MAC) and Multi-Way relay channels, respectively. However, in these works, the achievable rate of CMF has not been given by a closed-form expression, but rather as the solution of an optimization problem ~\cite{N1}. In fact, to the best of our knowledge, all the previous works, including~\cite{N1} and~\cite{MWRC}, have employed numerical calculations or simulations as a part of their analysis to find the outage probability.\par
In this paper, we first present some lemmas which help us to reduce the complexity of optimal CMF method. These lemmas are then used to approximate the optimum CMF as a simplified CMF with considerably less computational complexity. We provide the exact outage probability of our proposed scheme and show that our proposed method performs near the optimum CMF in low SNR values. We also consider the effect of the channel estimation error (CEE) on the performance of both the optimum and simplified CMF methods. our simulation results show that the proposed scheme is more robust against the CEE than the optimum scheme for the examples considered.\par
The rest of this paper is organized as follows. Section~\ref{sec:system} introduces the system model and assumptions. The optimum and the proposed CMF methods are presented in section~\ref{sec:ECV}. Section~\ref{sec:analysis} includes the performance analysis of the proposed method. Simulation and numerical results are presented in section~\ref{sec:simulation}. Finally, the paper is concluded in section~\ref{sec:conclusion}.\par

\section{SYSTEM MODEL} \label{sec:system}
We consider a general network, which itself can be a part of a larger network, consisting of $L=2$ transmitters as message sources, $M$ relays, and one receiver (destination), as shown in Fig.~\ref{fig:Model}. The destination aims to reliably recover both messages of the sources.\par
\begin{figure}[t]
\renewcommand{\figurename}{Fig.}
\centering
\includegraphics[width = \columnwidth]{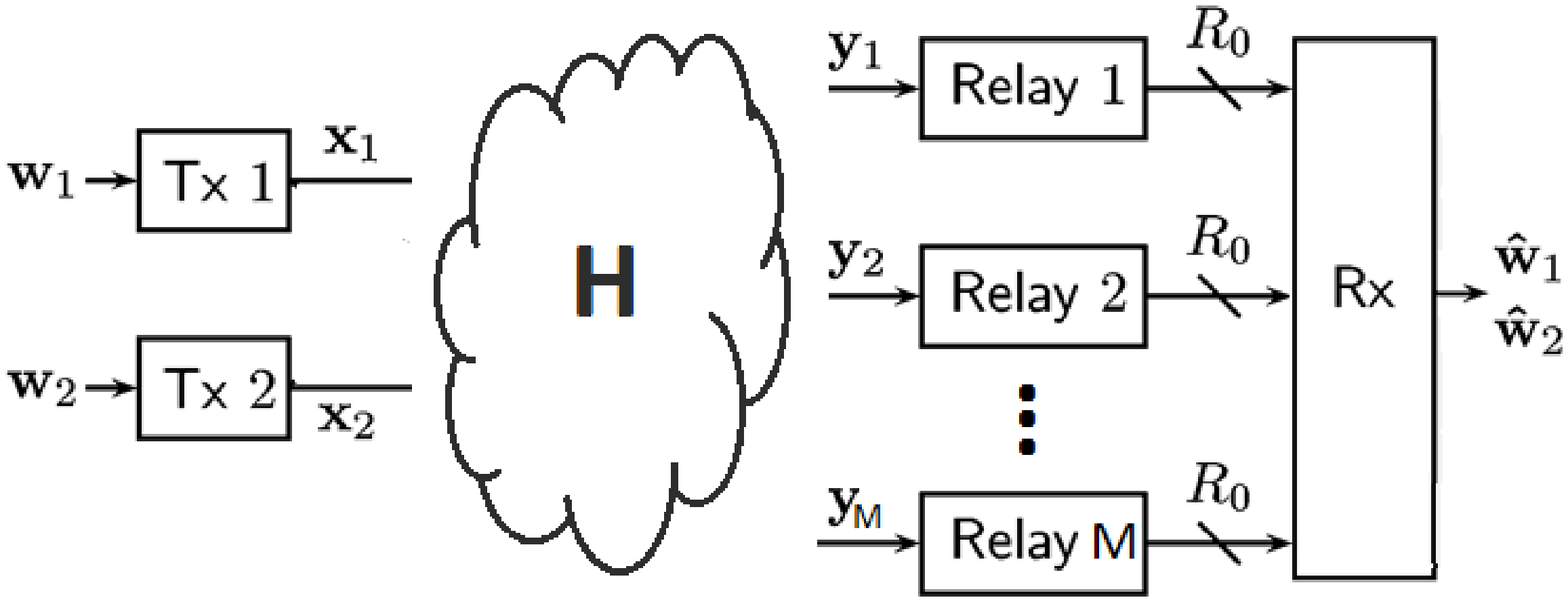}
\caption{System model.}
\label{fig:Model}
\end{figure}
Let the channel gain from source $l$ to relay $m$ and the channel gain vector of relay $m$ be denoted by $h_{ml}$ and  ${{\bf{h}}_m} = {[{h_{m1}},{h_{m2}}]^T}$, respectively. We consider block fading channels, and assume that the channel gains are real independent and identically Rayleigh distributed (\textit{i.i.d.}) variables with unit variance. The channel noises are additive white Gaussian (AWGN) with unit variance. Let $P_l$ be the transmission power of source $l$. Then, the average SNR received at each relay from source $l$ is equal to $P_l$, when not considering the shadowing effects or assuming the same shadowing effects for all relays. The results can be easily extended to more general cases. The channels from relays to the destination are assumed to be orthogonal and have high enough capacity ($R_0$ in Fig.~\ref{fig:Model}) to reliably transfer the required information.\par
Our proposed method of transmission over the network is based on the CMF scheme. A detailed description of CMF scheme can be found in~\cite{N1}. First, two sources map their messages ${\mathbf{w}}_1$ and ${\bf{w}}_2$ to symbols ${\bf{x}}_1$ and ${\bf{x}}_2$, respectively, and transmit the symbols simultaneously. Receiving a noisy linear combination of transmitted symbols ${\bf{y}}_m$, the $m$-th relay, $m=1,2,…,M$, computes an equation (integer linear combination) of transmitted messages with ECV equal to  ${{\mathbf{a}}_m} = {[{a_1},{a_2}]^T} \in {\mathbb{Z}^2}$.Then, all relays pass their computed equations and ECVs to the destination. Finally, the destination attempts to extract both messages from $M$ ($M\geq2$) received equations. Specifically, the destination selects two independent equations with the highest rates and solves them to recover both the messages.\par

\section{EFFICIENT ECV} \label{sec:ECV}
We attempt to find an efficient ECV in the relay that provides the highest computation rate (the rate of recovering an equation) while imposes tolerable complexity on the system. A relay $m$ with channel vector ${\mathbf{h}}_m$ can recover an equation with ECV equal to $\mathbf{a}$, as long as the message rates are less than the computation rate~\cite{N1} defined as
\begin{equation}
R({{\mathbf{h}}_m},{\mathbf{a}}) = \,\frac{1}{2}{\log ^ + }\left( {{{\left( {{{\left\| {\mathbf{a}} \right\|}^2} - \frac{{P{{\left| {{{\mathbf{a}}^T}{{\mathbf{h}}_m}} \right|}^2}}}{{1 + P{{\left\| {{{\mathbf{h}}_m}} \right\|}^2}}}} \right)}^{ - 1}}} \right),
\end{equation}
where the two sources have the same power $P$, i.e. $P_1=P_2=P$, and
\begin{equation}
\log ^ + (x) \triangleq \max (x,0).
\end{equation}\par
 If we define the vector ${\mathbf{g}}_m$ as
\begin{equation}\label{eq:gmVec}
{{\mathbf{g}}_m} \triangleq {\left[ {{g_{m1}},{g_{m2}}} \right]^T} = {\left[ {{h_{m1}}\sqrt {{P_1}} ,{h_{m2}}\sqrt {{P_2}} } \right]^T},
\end{equation}\
computation rate for unequal source powers can be written as
\begin{equation}\label{eq:Rga}
R({{\mathbf{g}}_m},{\mathbf{a}}) = \,\frac{1}{2}{\log ^ + }\left( {{{\left( {{{\left\| {\mathbf{a}} \right\|}^2} - \frac{{{{\left| {{{\mathbf{a}}^T}{{\mathbf{g}}_m}} \right|}^2}}}{{1 + {{\left\| {{{\mathbf{g}}_m}} \right\|}^2}}}} \right)}^{ - 1}}} \right).
\end{equation}
It is notable that $g_{ml}^2$ is the instantaneous received SNR from source $l$ at the relay and hence, ${\left\| {{{\mathbf{g}}_m}} \right\|^2}$ (called instantaneous sum SNR) equals the sum of all instantaneous received SNRs. 
In the following, the optimum method along with our proposed suboptimum method of selecting ECVs is presented.\par

\subsection{Optimum Compute-and-Forward}
In the optimum CMF method~\cite{N1}, each relay selects the ECV with the highest computation rate; More specifically relay $m$ calculates its ECV according to the following maximization problem
\begin{equation}
 {{\mathbf{a}}_m} = \arg \max_{\substack{{\mathbf{a}} \in {\mathbb{Z}^2}\\{\mathbf{a}} \ne {\mathbf{0}}}} R({{\mathbf{g}}_m},{\mathbf{a}}),
\end{equation}
where is equivalent to~\cite{Flatness}
\begin{equation} \label{eq:opt2}
 {{\mathbf{a}}_m} = \arg \min_{\substack{{\mathbf{a}} \in {\mathbb{Z}^2}\\{\mathbf{a}} \ne {\mathbf{0}}}} {{\mathbf{a}}^T}{{\mathbf{G}}_m}{\mathbf{a}},
\end{equation}
where ${\mathbf{g}}_m$ is defined in (\ref{eq:gmVec}) and ${\mathbf{G}}_m$ is a positive-definite matrix and is defined as
\begin{equation}\label{eq:gmMat}
{{\mathbf{G}}_m} \triangleq \left( {{\mathbf{I}} - \frac{{{{\mathbf{g}}_m}{{\mathbf{g}}_m}^T}}{{1 + {{\left\| {{{\mathbf{g}}_m}} \right\|}^2}}}} \right).
\end{equation}\par
The above integer optimization problem is equivalent to the shortest vector problem (SVP), and has no closed-form solution~\cite{Flatness, MWRC}. Lattice reduction algorithms such as~\cite{Fin} can be applied to calculate the optimum ECV numerically. However, due to the large search space of the problem on hand, the complexity of the system is considerably high. In the following, we introduce some techniques to reduce the size of search space and then propose our simplified compute-and-forward method. \par

\subsection{Reducing the search space}
The optimization in~(\ref{eq:opt2}) is defined over $\mathbb{Z}^2$ and has a infinite search space. To limit the size of search set, which leads to reduced complexity, we use the following lemmas. Note that all the following lemmas are also hold for the more general case when the number of sources are larger than 2 ($L \geq 2$). In fact for this general case, we have ${{\bf{g}}_m} = {[{g_{m1}},\cdots,{g_{mL}}]^T}$ and ${{\mathbf{a}}_m} = {[{a_1},\cdots,{a_L}]^T} \in {\mathbb{Z}^L}$. Moreover, although we consider real vectors, the complex case can be modeled as a real one with $2L$ sources and $2M$ relays~\cite{N1}. \par

\begin{lemma} \label{lem:1}
 (Lemma 1 in~\cite{N1}) An ECV ${\mathbf{a}}_m$ results in a zero computation rate for 
\begin{equation}
	{\left\| {{{\mathbf{a}}_m}} \right\|^2} \geqslant 1 + {\left\| {{{\mathbf{g}}_m}} \right\|^2}.
\end{equation}
\end{lemma}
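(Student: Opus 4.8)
The plan is to show that the computation rate $R({\mathbf{g}}_m, {\mathbf{a}}_m)$ vanishes precisely when its argument inside the $\log^+$ fails to exceed $1$, i.e.\ when $\left(\|{\mathbf{a}}_m\|^2 - \frac{|{\mathbf{a}}_m^T {\mathbf{g}}_m|^2}{1 + \|{\mathbf{g}}_m\|^2}\right)^{-1} \leq 1$, which is equivalent to the quadratic form ${\mathbf{a}}_m^T {\mathbf{G}}_m {\mathbf{a}}_m \geq 1$ by definition~(\ref{eq:gmMat}). So the task reduces to showing that $\|{\mathbf{a}}_m\|^2 \geq 1 + \|{\mathbf{g}}_m\|^2$ implies ${\mathbf{a}}_m^T {\mathbf{G}}_m {\mathbf{a}}_m \geq 1$. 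First I would write ${\mathbf{a}}_m^T {\mathbf{G}}_m {\mathbf{a}}_m = \|{\mathbf{a}}_m\|^2 - \frac{|{\mathbf{a}}_m^T {\mathbf{g}}_m|^2}{1 + \|{\mathbf{g}}_m\|^2}$ and bound the subtracted term from above.

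The key step is the Cauchy--Schwarz inequality: $|{\mathbf{a}}_m^T {\mathbf{g}}_m|^2 \leq \|{\mathbf{a}}_m\|^2 \|{\mathbf{g}}_m\|^2$. Substituting this gives
\begin{equation}
{\mathbf{a}}_m^T {\mathbf{G}}_m {\mathbf{a}}_m \;\geq\; \|{\mathbf{a}}_m\|^2 - \frac{\|{\mathbf{a}}_m\|^2 \|{\mathbf{g}}_m\|^2}{1 + \|{\mathbf{g}}_m\|^2} \;=\; \frac{\|{\mathbf{a}}_m\|^2}{1 + \|{\mathbf{g}}_m\|^2}.
\end{equation}
Now invoke the hypothesis $\|{\mathbf{a}}_m\|^2 \geq 1 + \|{\mathbf{g}}_m\|^2$ to conclude $\frac{\|{\mathbf{a}}_m\|^2}{1 + \|{\mathbf{g}}_m\|^2} \geq 1$, hence ${\mathbf{a}}_m^T {\mathbf{G}}_m {\mathbf{a}}_m \geq 1$, so the expression whose logarithm is taken is at most $1$, its $\log$ is at most $0$, and therefore $\log^+$ of it is exactly $0$, giving $R({\mathbf{g}}_m,{\mathbf{a}}_m)=0$.

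There is essentially no hard part here; the only mild subtlety is bookkeeping the equivalence between the two forms of the computation rate — that $R({\mathbf{g}}_m,{\mathbf{a}})$ as written in~(\ref{eq:Rga}) has argument equal to $({\mathbf{a}}^T {\mathbf{G}}_m {\mathbf{a}})^{-1}$ — and being careful with the direction of the inequality when passing through the reciprocal (which is fine since ${\mathbf{G}}_m$ is positive-definite, so ${\mathbf{a}}_m^T {\mathbf{G}}_m {\mathbf{a}}_m > 0$ and the reciprocal is well defined and order-reversing on the positive reals). I would also note that the argument works verbatim for $L \geq 2$ sources since Cauchy--Schwarz and the algebraic identity for ${\mathbf{G}}_m$ hold in any dimension, which matches the remark preceding the lemma. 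This is exactly Lemma~1 of~\cite{N1}, so I would keep the proof to these few lines.
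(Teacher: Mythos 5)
Your proof is correct; note that the paper itself gives no proof of this lemma, simply importing it as Lemma~1 of~\cite{N1}, and your Cauchy--Schwarz bound $\mathbf{a}_m^T\mathbf{G}_m\mathbf{a}_m \geq \|\mathbf{a}_m\|^2/(1+\|\mathbf{g}_m\|^2) \geq 1$ is exactly the standard argument used there. The only point worth flagging is that you (rightly) read $\log^+$ as $\max(\log(\cdot),0)$, whereas the paper's definition~(2) literally says $\max(x,0)$ --- a typo without which the lemma would not make sense.
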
\par
From this lemma, for the optimization problem of~(\ref{eq:opt2}), as stated in~\cite{N1}, it is sufficient to search only over ECVs that satisfy ${\left\| {{{\mathbf{a}}_m}} \right\|^2} < 1 + {\left\| {{{\mathbf{g}}_m}} \right\|^2}$. Therefore, using Lemma 1 results in a finite search space.\par

\begin{lemma}\label{lem:2}
For the solution of the optimization problem defined in~(\ref{eq:opt2}), we have either
\begin{equation}
\operatorname{sgn} \left( {{a_{ml}}} \right) = \operatorname{sgn} \left( {{g_{ml}}} \right), \textrm{ for all } l=1,\cdots,L,
\end{equation}
or
\begin{equation}
\operatorname{sgn} \left( {{a_{ml}}} \right) = -\operatorname{sgn} \left( {{g_{ml}}} \right), \textrm{ for all } l=1,\cdots,L,
\end{equation}
and if $g_{ml}=0$, then $a_{ml}=0$.
\end{lemma}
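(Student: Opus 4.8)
The plan is to prove the claim by a sign-flipping exchange argument. Starting from~(\ref{eq:gmMat}), the objective in~(\ref{eq:opt2}) rewrites as
\begin{equation}
{\mathbf{a}}^T{\mathbf{G}}_m{\mathbf{a}} = \|{\mathbf{a}}\|^2 - \frac{({\mathbf{a}}^T{\mathbf{g}}_m)^2}{1+\|{\mathbf{g}}_m\|^2},
\end{equation}
which is exactly the quantity inverted inside $R({\mathbf{g}}_m,{\mathbf{a}})$ in~(\ref{eq:Rga}); so minimizing it means making $\|{\mathbf{a}}\|^2$ small while making $|{\mathbf{a}}^T{\mathbf{g}}_m|$ large. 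The crucial point is that ${\mathbf{a}}^T{\mathbf{g}}_m = \sum_l a_{ml}g_{ml}$, so by the triangle inequality $|{\mathbf{a}}^T{\mathbf{g}}_m| \le \sum_l |a_{ml}|\,|g_{ml}|$, with equality precisely when the nonzero products $a_{ml}g_{ml}$ all share a common sign.

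The first step is to fix an optimal ${\mathbf{a}}_m$ and introduce the modified vector $\tilde{\mathbf{a}}$ with entries $\tilde a_{ml}=|a_{ml}|\operatorname{sgn}(g_{ml})$ (so $\tilde a_{ml}=0$ whenever $g_{ml}=0$). Then $\|\tilde{\mathbf{a}}\|^2 \le \|{\mathbf{a}}_m\|^2$, with equality iff $a_{ml}=0$ at every coordinate where $g_{ml}=0$; and $(\tilde{\mathbf{a}}^T{\mathbf{g}}_m)^2 = \big(\sum_l |a_{ml}|\,|g_{ml}|\big)^2 \ge ({\mathbf{a}}_m^T{\mathbf{g}}_m)^2$, with equality iff the nonzero $a_{ml}g_{ml}$ share a common sign. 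Combining these two comparisons gives $\tilde{\mathbf{a}}^T{\mathbf{G}}_m\tilde{\mathbf{a}} \le {\mathbf{a}}_m^T{\mathbf{G}}_m{\mathbf{a}}_m$, and the inequality is strict if ${\mathbf{a}}_m$ has a nonzero entry where $g_{ml}=0$, or has two entries with $a_{ml}g_{ml}$ of opposite sign. Optimality of ${\mathbf{a}}_m$ then forbids both situations, provided $\tilde{\mathbf{a}}\neq{\mathbf{0}}$. To discharge this proviso, note that $\tilde{\mathbf{a}}={\mathbf{0}}$ would force ${\mathbf{a}}_m$ to be supported only on coordinates with $g_{ml}=0$, hence ${\mathbf{a}}_m^T{\mathbf{g}}_m=0$ and the objective value $=\|{\mathbf{a}}_m\|^2\ge 1$; but any basis vector ${\mathbf{e}}_j$ with $g_{mj}\neq 0$ attains $1-\frac{g_{mj}^2}{1+\|{\mathbf{g}}_m\|^2}<1$, contradicting optimality (this uses ${\mathbf{g}}_m\neq{\mathbf{0}}$, which holds almost surely; if ${\mathbf{g}}_m={\mathbf{0}}$ then ${\mathbf{G}}_m={\mathbf{I}}$ and the statement is vacuous). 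The same estimate also shows ${\mathbf{a}}_m^T{\mathbf{g}}_m\neq 0$ at the optimum, so the common sign $\varepsilon\in\{+1,-1\}$ of the nonzero products $a_{ml}g_{ml}$ is well defined; $\varepsilon=+1$ yields the first alternative and $\varepsilon=-1$ the second, while the ruled-out cases give $g_{ml}=0\Rightarrow a_{ml}=0$.

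I do not expect a real obstacle: the argument is elementary once the objective is put in the displayed form. The only points requiring care are the bookkeeping of the equality conditions in the triangle inequality and in the norm comparison, and the separate treatment of the degenerate coordinates where $g_{ml}=0$ (and the correspondingly degenerate possibility $\tilde{\mathbf{a}}={\mathbf{0}}$), which is why I isolate that case explicitly rather than folding it into the main exchange step.
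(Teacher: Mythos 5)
Your proposal is correct and follows essentially the same route as the paper's proof: fix the magnitudes $|a_{ml}|$ and observe that the only sign-dependent term in ${\mathbf{a}}^T{\mathbf{G}}_m{\mathbf{a}}$ is $-\left|{\mathbf{a}}^T{\mathbf{g}}_m\right|^2/(1+\|{\mathbf{g}}_m\|^2)$, which is optimized exactly when all products $a_{ml}g_{ml}$ share a common sign. Your version is in fact more complete than the paper's one-line argument, since you explicitly construct the competitor $\tilde{\mathbf{a}}$, verify the equality conditions, justify the claim $g_{ml}=0\Rightarrow a_{ml}=0$, and rule out the degenerate cases $\tilde{\mathbf{a}}={\mathbf{0}}$ and ${\mathbf{a}}_m^T{\mathbf{g}}_m=0$, none of which the paper addresses.
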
\par

The operator $\operatorname{sgn}(\cdot)$ shows the sign function.\par
%
\begin{proof}
The only term in~(\ref{eq:Rga}) that depends on the sign of elements of ${\mathbf{a}}$ is the term ${{\left| {{{\mathbf{a}}^T}{{\mathbf{g}}_m}} \right|}^2}={\left| {\sum\nolimits_l {{a_{l}}{g_{ml}}} } \right|^2}$. For maximizing~(\ref{eq:Rga}), or equivalently (\ref{eq:opt2}), when the norm of ${\mathbf{a}}$ is fixed, this term has to be maximized. Thus, all the terms $a_{l}g_{ml} ,l=1,\cdots,L$ must have the same sign and this proves the lemma.\qed
\end{proof}\par
Lemma 2 suggests that  the search can be done over non-negative integers. Specifically, we replace the elements of ${\mathbf{g}}_m$ with their absolute values and solve the optimization problem~(\ref{eq:opt2}) by searching over non-negative  integers. The solution gives the absolute values of the elements of optimum ECV ${\mathbf{a}}_m$, where the signs of its elements are determined according to Lemma 2. It is noteworthy that both ${\mathbf{a}}_m$ and $-{\mathbf{a}}_m$ result in the same computation rates. In the rest of this paper, without loss of generality, we assume that all elements of ${\mathbf{a}}_m$ and ${\mathbf{g}}_m$ are non-negative.

\begin{lemma}\label{lem:3}
Any ECV ${\mathbf{a}}_m$ with $\gcd \left( {{a_{m1}}, \cdots ,{a_{mL}}} \right) > 1$ cannot be the solution of ~(\ref{eq:opt2}).
\end{lemma}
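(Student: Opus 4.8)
The plan is to argue by contradiction, relying only on the fact that $\mathbf{G}_m$ defined in~(\ref{eq:gmMat}) is positive definite. Suppose, toward a contradiction, that an ECV $\mathbf{a}_m$ solving~(\ref{eq:opt2}) has $d \triangleq \gcd\left(a_{m1},\cdots,a_{mL}\right) > 1$. Since every entry of $\mathbf{a}_m$ is divisible by $d$, the scaled vector $\mathbf{a}' \triangleq \mathbf{a}_m / d$ still has integer entries, i.e. $\mathbf{a}' \in \mathbb{Z}^L$, and $\mathbf{a}' \ne \mathbf{0}$ because $\mathbf{a}_m \ne \mathbf{0}$. Thus $\mathbf{a}'$ is a feasible point of the minimization in~(\ref{eq:opt2}).

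Next I would compare objective values. By homogeneity of the quadratic form, $\mathbf{a}_m^T \mathbf{G}_m \mathbf{a}_m = d^2 \, (\mathbf{a}')^T \mathbf{G}_m \mathbf{a}'$. Because $\mathbf{G}_m$ is positive definite and $\mathbf{a}' \ne \mathbf{0}$, we have $(\mathbf{a}')^T \mathbf{G}_m \mathbf{a}' > 0$; together with $d^2 \geq 4 > 1$ this yields $\mathbf{a}_m^T \mathbf{G}_m \mathbf{a}_m > (\mathbf{a}')^T \mathbf{G}_m \mathbf{a}'$. Hence $\mathbf{a}'$ is feasible with a strictly smaller objective than the purported optimizer $\mathbf{a}_m$, which contradicts the optimality of $\mathbf{a}_m$. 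Therefore no solution of~(\ref{eq:opt2}) can have a common divisor exceeding $1$.

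There is essentially no hard step here: the only point that must be invoked carefully is the positive-definiteness of $\mathbf{G}_m$, which is already asserted below~(\ref{eq:gmMat}) (its eigenvalues are $1$ and $\left(1 + \left\|\mathbf{g}_m\right\|^2\right)^{-1}$, both strictly positive), since this is exactly what guarantees the strict inequality $(\mathbf{a}')^T \mathbf{G}_m \mathbf{a}' > 0$ for the nonzero integer vector $\mathbf{a}'$. I would also remark in passing that the same scaling argument shows any optimal $\mathbf{a}_m$ is primitive, which is the statement needed to further shrink the search set in the simplified scheme.
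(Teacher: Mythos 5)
Your proof is correct and follows essentially the same route as the paper: divide $\mathbf{a}_m$ by its gcd to obtain a feasible integer vector whose objective value is smaller by a factor of $d^2$, contradicting optimality. The only difference is that you explicitly invoke the positive definiteness of $\mathbf{G}_m$ to justify the \emph{strict} inequality, a point the paper's one-line computation leaves implicit.
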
\par
The operator $\gcd \left( {{a_{m1}}, \cdots ,{a_{mL}}}\right)$ shows the greatest common divisor of integers ${{a_{m1}}, \cdots ,{a_{mL}}}$ and we assume $\gcd (a,0)=a$.\par
\begin{proof}
Assume that $\gcd \left( {{a_{m1}}, \cdots ,{a_{mL}}}\right)=q>1$. If we define the integer vector ${\mathbf{b}}_m=\frac{1}{q}{\mathbf{a}}_m$, then we have
\begin{align}
{{\mathbf{b}}_m}^T{{\mathbf{G}}_m}{{\mathbf{b}}_m} = \frac{1}{{{q^2}}}{{\mathbf{a}}_m}^T{{\mathbf{G}}_m}{{\mathbf{a}}_m} < {{\mathbf{a}}_m}^T{{\mathbf{G}}_m}{{\mathbf{a}}_m},
\end{align}
and thus ${\mathbf{a}}_m$ cannot be the solution of~(\ref{eq:opt2}).\qed
\end{proof}\par
This lemma limits the search space of~(\ref{eq:opt2}) as well. It is sufficient to search over ECVs with $\gcd \left( {{a_{m1}}, \cdots ,{a_{mL}}}\right)=1$.\par

\begin{lemma}\label{lem:4}
Any ECV ${\mathbf{a}}_m$ with $\left\| {{{\mathbf{g}}_{min}}\left( {{{\mathbf{a}}_m}} \right)} \right\| > \left\| {{{\mathbf{g}}_m}} \right\|$ cannot be the solution of~(\ref{eq:opt2}), where 
\begin{equation}\label{eq:gmin_am}
\left\|{{\mathbf{g}}_{min}}\left( {{{\mathbf{a}}_m}} \right)  \right\| \triangleq
  \min_
{\substack{ {\mathbf{g}} \in {\mathbb{R}^L},{\mathbf{b}} \in {\mathbb{Z}^L} \\ 
{{\mathbf{a}}_m^T}{{\mathbf{G}}}{\mathbf{a}}_m \leq {{\mathbf{b}}^T}{{\mathbf{G}}}{\mathbf{b}} \\
{\left\| {\mathbf{b}} \right\|^2} \leqslant 1 + {\left\| {\mathbf{g}} \right\|^2} }}
\mspace{-10mu} \left\| {\mathbf{g}} \right\| ,
\end{equation}
and
\begin{equation}\label{eq:G_mat}
{{\mathbf{G}}} \triangleq \left( {{\mathbf{I}} - \frac{{{{\mathbf{g}}}{{\mathbf{g}}}^T}}{{1 + {{\left\| {{{\mathbf{g}}}} \right\|}^2}}}} \right).
\end{equation}
\end{lemma}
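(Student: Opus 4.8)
The plan is to prove the contrapositive: I will assume that $\mathbf{a}_m$ \emph{is} a solution of~(\ref{eq:opt2}) for the channel $\mathbf{g}_m$, and show that then $\|\mathbf{g}_{min}(\mathbf{a}_m)\|\le\|\mathbf{g}_m\|$, which is exactly the contrapositive of the lemma. The whole argument amounts to the observation that the true channel $\mathbf{g}_m$ itself, paired with an appropriate competing vector $\mathbf{b}$, is already a feasible point of the minimization in~(\ref{eq:gmin_am}); hence the value of that minimization cannot exceed $\|\mathbf{g}_m\|$.

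First I would record two consequences of $\mathbf{a}_m$ being the minimizer in~(\ref{eq:opt2}). First, for every nonzero $\mathbf{b}\in\mathbb{Z}^L$ one has $\mathbf{a}_m^T\mathbf{G}_m\mathbf{a}_m\le\mathbf{b}^T\mathbf{G}_m\mathbf{b}$, since each such $\mathbf{b}$ is admissible in~(\ref{eq:opt2}). Second, $\|\mathbf{a}_m\|^2\le 1+\|\mathbf{g}_m\|^2$: if $\mathbf{g}_m\neq\mathbf{0}$, choose an index $l$ with $g_{ml}\neq0$ and let $\mathbf{e}$ be the $l$-th standard basis vector; then $\mathbf{a}_m^T\mathbf{G}_m\mathbf{a}_m\le\mathbf{e}^T\mathbf{G}_m\mathbf{e}=1-g_{ml}^2/(1+\|\mathbf{g}_m\|^2)<1$, so $\mathbf{a}_m$ achieves a strictly positive computation rate and Lemma~\ref{lem:1}, in contrapositive form, yields $\|\mathbf{a}_m\|^2<1+\|\mathbf{g}_m\|^2$; and if $\mathbf{g}_m=\mathbf{0}$ then $\mathbf{G}_m=\mathbf{I}$, the minimizer is a standard basis vector, and $\|\mathbf{a}_m\|^2=1=1+\|\mathbf{g}_m\|^2$.

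Next I would exhibit the feasible point. Take $\mathbf{g}=\mathbf{g}_m$ in~(\ref{eq:gmin_am}), so that the matrix $\mathbf{G}$ of~(\ref{eq:G_mat}) coincides with $\mathbf{G}_m$ of~(\ref{eq:gmMat}), and take $\mathbf{b}=\mathbf{a}_m$. Then the constraint $\mathbf{a}_m^T\mathbf{G}\mathbf{a}_m\le\mathbf{b}^T\mathbf{G}\mathbf{b}$ holds trivially (with equality), and the constraint $\|\mathbf{b}\|^2\le1+\|\mathbf{g}\|^2$ is exactly the bound $\|\mathbf{a}_m\|^2\le1+\|\mathbf{g}_m\|^2$ just established. (Equivalently one may take $\mathbf{b}=\mathbf{e}$ a standard basis vector: then $\|\mathbf{b}\|^2=1\le1+\|\mathbf{g}_m\|^2$ is automatic and $\mathbf{a}_m^T\mathbf{G}_m\mathbf{a}_m\le\mathbf{e}^T\mathbf{G}_m\mathbf{e}$ is the first consequence above.) Hence $(\mathbf{g}_m,\mathbf{a}_m)$ lies in the feasible set over which~(\ref{eq:gmin_am}) minimizes $\|\mathbf{g}\|$, so $\|\mathbf{g}_{min}(\mathbf{a}_m)\|\le\|\mathbf{g}_m\|$, which is the contrapositive statement and hence proves the lemma.

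I do not expect a genuine obstacle here: the crux is recognizing the claim as a contrapositive feasibility statement rather than an optimization to be carried out, after which everything reduces to checking that one chosen $\mathbf{b}$ meets both constraints of~(\ref{eq:gmin_am}) at once. The only care needed is bookkeeping around the degenerate channel $\mathbf{g}_m=\mathbf{0}$, where every ECV has zero rate so the ``positive rate $\Rightarrow$ Lemma~\ref{lem:1}'' step must be replaced by the direct observation that the minimizer has unit norm; and noting that the feasible set of~(\ref{eq:gmin_am}) is nonempty precisely because the pair just constructed belongs to it, so the ``$\min$'' there is meaningful.
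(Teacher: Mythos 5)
Your proof is correct and follows essentially the same route as the paper's: assume $\mathbf{a}_m$ solves~(\ref{eq:opt2}) and exhibit $\mathbf{g}_m$ as a feasible point of~(\ref{eq:gmin_am}) so that the minimum cannot exceed $\left\|\mathbf{g}_m\right\|$. Your additional bookkeeping (the explicit choice of the competing vector $\mathbf{b}$, the bound $\left\|\mathbf{a}_m\right\|^2\le 1+\left\|\mathbf{g}_m\right\|^2$ via Lemma~\ref{lem:1}, and the degenerate case $\mathbf{g}_m=\mathbf{0}$) only makes the paper's one-line feasibility claim more precise.
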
\par
\begin{proof}
Assume that ECV ${\mathbf{a}}_m$ is the solution of~(\ref{eq:opt2}). Hence, it satisfies ${{\mathbf{a}}_m^T}{{\mathbf{G}}_m}{\mathbf{a}}_m \leq {{\mathbf{b}}^T}{{\mathbf{G}}_m}{\mathbf{b}}$ for all $\mathbf{b} \in {\mathbb{Z}^L}$. From Lemma 1, it is sufficient for ${\mathbf{a}}_m$ to satisfy ${{\mathbf{a}}_m^T}{{\mathbf{G}}_m}{\mathbf{a}}_m \leq {{\mathbf{b}}^T}{{\mathbf{G}}_m}{\mathbf{b}}$ only for all ${\left\| {\mathbf{b}} \right\|^2} \leqslant 1 + {\left\| {\mathbf{g}}_m \right\|^2}$. Therefore, ${\mathbf{g}}_m$ is a feasible point  of~(\ref{eq:gmin_am}) (since it satisfies both constraints of ~(\ref{eq:gmin_am})). As a result, for  the optimal solution of~(\ref{eq:gmin_am})  we have $\left\| {{{\mathbf{g}}_{min}}\left( {{{\mathbf{a}}_m}} \right)} \right\| \leq \left\| {{{\mathbf{g}}_m}} \right\|$.\qed
\end{proof}\par

Since ${\left\| {{{\mathbf{g}}_m}} \right\|^2}$ is the instantaneous sum SNR. It follows from Lemma~4 that the minimum instantaneous sum SNR for which an ECV  ${\mathbf{e}}_k$ can be selected as the solution of ~(\ref{eq:opt2}) equals $\left\| {{{\mathbf{g}}_{min}}\left( {{{\mathbf{e}}_k}} \right)} \right\|^2$. Optimization defined in~(\ref{eq:gmin_am}) is a mixed integer optimization problem and can be solved through numerical methods. Lemma 4 can significantly reduce the search space. To show its effect, we have computed the values of $\left\| {{{\mathbf{g}}_{min}}\left( {{{\mathbf{e}}_k}} \right)} \right\|$ for different ECVs ${\mathbf{e}}_k$ up to the instantaneous sum SNR about to 2000 (33 dB), in the case of $L=2$ sources. The results are sorted and provided in Tabel~\ref{tb:gmin}. As an example, in Fig.~\ref{fig:Regions}, the distance between point $F$ and the origin is equal to  $\left\| {{{\mathbf{g}}_{min}}\left( {{{\mathbf{e}}_k}} \right)} \right\|$ for ${\mathbf{e}}_k={[2,1]}^T$. Now, according to the Lemma 4, for a given ${\mathbf{g}}_m$, equivalent to a constant instantaneous sum SNR, only ECVs that satisfy $\left\| {{{\mathbf{g}}_{min}}\left( {{{\mathbf{a}}_m}} \right)} \right\| \leq
 \left\| {{{\mathbf{g}}_m}} \right\|$ are needed to be searched. These ECVs can be found from Tabel~\ref{tb:gmin}.  Note that Tabel~\ref{tb:gmin} is computed just once and thus adds no extra complexity to the method.This approach can be extended to fading channels, in which the instantaneous sum SNR has a certain probability distribution function, and with a high probability it is less than a threshold. This idea is the base of our proposed method which is described in the following subsection. \par

\begin{table}
\setlength{\tabcolsep}{4pt}
\centering
\caption{Computed $\left\| {{{\mathbf{g}}_{\mathrm{min}}}\left( {{{\mathbf{e}}_k}} \right)} \right\|$ for different ECVs, sorted in ascending order.}
\begin{tabular}{ccc|ccc}
\hline\hline
&&&&&\\[-2ex]
$k$ & ${\mathbf{e}}_k$ &  $\left\| {{{\mathbf{g}}_{min}}\left( {{{\mathbf{e}}_k}} \right)} \right\|^2$  &  $k$ & ${\mathbf{e}}_k$ &  $\left\| {{{\mathbf{g}}_{min}}\left( {{{\mathbf{e}}_k}} \right)} \right\|^2$ \\
\hline
&&&&&\\ [-2ex]
1 & $ [ 1 , 0 ] ^ T $ & 0 & 13 & $ [ 3 , 4 ] ^ T $ & 530.330 \\ 
2 & $ [ 0 , 1 ] ^ T $ & 0 & 14 & $ [ 5 , 1 ] ^ T $ & 626.000 \\
3 & $ [ 1 , 1 ] ^ T $ & 2 & 15 & $ [ 1 , 5 ] ^ T $ & 626.000 \\
4 & $ [ 2 , 1 ] ^ T $ & 18.282 & 16 & $ [ 5 , 2 ] ^ T $ & 642.334 \\
5 & $ [ 1 , 2 ] ^ T $ & 18.282 & 17 & $ [ 2 , 5 ] ^ T $ & 642.334 \\
6 & $ [ 3 , 1 ] ^ T $ & 82.321 & 18 & $ [ 5 , 3 ] ^ T $ & 898.333 \\
7 & $ [ 1 , 3 ] ^ T $ & 82.321 & 19 & $ [ 3 , 5 ] ^ T $ & 898.333 \\
8 & $ [ 3 , 2 ] ^ T $ & 130.325 & 20 & $ [ 6 , 1 ] ^ T $ & 1297.001 \\
9 & $ [ 2 , 3 ] ^ T $ & 130.325 & 21 & $ [ 1 , 6 ] ^ T $ & 1297.001 \\
10 & $ [ 4 , 1 ] ^ T $ & 256.996 & 22 & $ [ 5 , 4 ] ^ T $ & 1521.999 \\
11 & $ [ 1 , 4 ] ^ T $ & 256.996 & 23 & $ [ 4 , 5 ] ^ T $ & 1521.999 \\
12 & $ [ 4 , 3 ] ^ T $ & 530.330 & 24 & $ [ 7 , 2 ] ^ T $ & 2130.330 \\
\end{tabular}
\label{tb:gmin}
\end{table}
 
All of the provided lemmas reduce the complexity of optimum compute-and-forward method by limiting the search space of ~(\ref{eq:opt2}). As an example, consider two vectors ${\mathbf{g}}_1$ and ${\mathbf{g}}_2$ with ${\left\| {{{\mathbf{g}}_1}} \right\|^2}=100$ (i.e. 20 dB instantaneous sum SNR) and ${\left\| {{{\mathbf{g}}_2}} \right\|^2}=1000$ (i.e. 30 dB instantaneous sum SNR), respectively. Lemma 1 limits the size of search space to 317 and 3141 vectors, respectively. Lemma 2 (in conjunction with Lemma 1) decreases it to  89 and 818, respectively. The size reduces to 49 and 479, respectively, by using Lemma 3.  Finally, by exploiting all of the four lemmas, the search set only contains 7 and 23 vectors, respectively. However, as it is observed, by increasing the SNR, the search set and complexity grow considerably.\par
The following lemma facilitates 
the completion of Tabel~\ref{tb:gmin}.
\begin{lemma}\label{lem:5}
For Any permutation ${{\mathbf{\tilde e}}_k}$ of ECV ${\mathbf{e}}_k$, we have $\left\| {{{\mathbf{g}}_{min}}\left( {{{{\mathbf{\tilde e}}}_k}} \right)} \right\| = \left\| {{{\mathbf{g}}_{min}}\left( {{{\mathbf{e}}_k}} \right)} \right\|$.
\end{lemma}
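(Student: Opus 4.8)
The plan is to realize the coordinate permutation as an orthogonal permutation matrix $\mathbf{P}$ (so $\mathbf{P}^T\mathbf{P}=\mathbf{I}$) which in addition maps the integer lattice $\mathbb{Z}^L$ bijectively onto itself, and then to show that the change of variables $\mathbf{g}\mapsto\mathbf{P}\mathbf{g}$, $\mathbf{b}\mapsto\mathbf{P}\mathbf{b}$ carries the feasible region of the minimization in~(\ref{eq:gmin_am}) written for $\mathbf{e}_k$ onto the feasible region of the same minimization written for $\tilde{\mathbf{e}}_k=\mathbf{P}\mathbf{e}_k$, while leaving the objective $\left\|\mathbf{g}\right\|$ unchanged. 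Once this norm-preserving bijection between feasible sets is in hand, the two minimal values must coincide, which is exactly the claim.

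Concretely, first I would fix the permutation matrix $\mathbf{P}$ with $\tilde{\mathbf{e}}_k=\mathbf{P}\mathbf{e}_k$ and record the two elementary facts $\left\|\mathbf{P}\mathbf{v}\right\|=\left\|\mathbf{v}\right\|$ for every $\mathbf{v}\in\mathbb{R}^L$, and $\mathbf{b}\in\mathbb{Z}^L$ precisely when $\mathbf{P}\mathbf{b}\in\mathbb{Z}^L$. The key step is the covariance of the matrix in~(\ref{eq:G_mat}): writing $\mathbf{G}(\mathbf{g})$ for that matrix evaluated at $\mathbf{g}$, and using $\left\|\mathbf{P}\mathbf{g}\right\|^2=\left\|\mathbf{g}\right\|^2$,
\begin{equation*}
\mathbf{G}(\mathbf{P}\mathbf{g}) = \mathbf{I}-\frac{(\mathbf{P}\mathbf{g})(\mathbf{P}\mathbf{g})^T}{1+\left\|\mathbf{g}\right\|^2} = \mathbf{P}\Bigl(\mathbf{I}-\frac{\mathbf{g}\mathbf{g}^T}{1+\left\|\mathbf{g}\right\|^2}\Bigr)\mathbf{P}^T = \mathbf{P}\,\mathbf{G}(\mathbf{g})\,\mathbf{P}^T .
\end{equation*}
From this one gets at once $\tilde{\mathbf{e}}_k^{\,T}\mathbf{G}(\mathbf{P}\mathbf{g})\tilde{\mathbf{e}}_k=(\mathbf{P}\mathbf{e}_k)^T\mathbf{P}\,\mathbf{G}(\mathbf{g})\,\mathbf{P}^T(\mathbf{P}\mathbf{e}_k)=\mathbf{e}_k^{\,T}\mathbf{G}(\mathbf{g})\mathbf{e}_k$, and likewise $(\mathbf{P}\mathbf{b})^T\mathbf{G}(\mathbf{P}\mathbf{g})(\mathbf{P}\mathbf{b})=\mathbf{b}^T\mathbf{G}(\mathbf{g})\mathbf{b}$; moreover $\left\|\mathbf{P}\mathbf{b}\right\|^2\leq 1+\left\|\mathbf{P}\mathbf{g}\right\|^2$ is equivalent to $\left\|\mathbf{b}\right\|^2\leq 1+\left\|\mathbf{g}\right\|^2$. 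Hence the pair $(\mathbf{P}\mathbf{g},\mathbf{P}\mathbf{b})$ satisfies every constraint of the $\tilde{\mathbf{e}}_k$-problem exactly when $(\mathbf{g},\mathbf{b})$ satisfies every constraint of the $\mathbf{e}_k$-problem (the step is insensitive to whether the constraint involving $\mathbf{b}$ is read existentially or universally in $\mathbf{b}$), and the objective is preserved since $\left\|\mathbf{P}\mathbf{g}\right\|=\left\|\mathbf{g}\right\|$.

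Since $\mathbf{P}$ is invertible with inverse $\mathbf{P}^T$, the map $(\mathbf{g},\mathbf{b})\mapsto(\mathbf{P}\mathbf{g},\mathbf{P}\mathbf{b})$ is a norm-preserving bijection between the two feasible sets, so their infima, hence minima, are equal, giving $\left\|\mathbf{g}_{min}(\tilde{\mathbf{e}}_k)\right\|=\left\|\mathbf{g}_{min}(\mathbf{e}_k)\right\|$. I do not anticipate a genuine obstacle: the argument is a pure symmetry/change-of-variables computation, and the only point demanding care is the verification of the covariance identity $\mathbf{G}(\mathbf{P}\mathbf{g})=\mathbf{P}\,\mathbf{G}(\mathbf{g})\,\mathbf{P}^T$ together with the two structural facts that a permutation matrix preserves the Euclidean norm and maps $\mathbb{Z}^L$ onto itself.
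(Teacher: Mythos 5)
Your proof is correct and follows essentially the same route as the paper: both realize the permutation as an orthogonal matrix that preserves norms and maps $\mathbb{Z}^L$ onto itself, and both verify that the quadratic forms in~(\ref{eq:gmin_am}) are invariant under the induced change of variables, yielding a norm-preserving bijection between the two feasible sets. Your packaging of the computation as the covariance identity $\mathbf{G}(\mathbf{P}\mathbf{g})=\mathbf{P}\,\mathbf{G}(\mathbf{g})\,\mathbf{P}^T$ is just a tidier presentation of the same algebra the paper carries out entrywise.
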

\begin{proof}
Let ${{\mathbf{\tilde e}}_k}$ be a permutation of ${\mathbf{e}}_k$, then there exists a permutation matrix $\mathbf{U}$ such that ${{{\mathbf{\tilde e}}}_k} = {\mathbf{U}}{{\mathbf{e}}_k}$. According to the properties of a permutation matrix, we have that $\left\| {{{{\mathbf{\tilde e}}}_k}} \right\| = \left\| {{\mathbf{U}}{{\mathbf{e}}_k}} \right\| = \left\| {{{\mathbf{e}}_k}} \right\|$ and ${{\mathbf{U}}^{ - 1}} = {{\mathbf{U}}^T}$. From~(\ref{eq:gmin_am}) and (\ref{eq:G_mat}), we can write

\begin{equation}\label{eq:gmin_am2}
\left\|{{\mathbf{g}}_{min}}\left( {{{\mathbf{\tilde e}}_k}} \right)  \right\| \triangleq
  \min_
{\substack{ {\mathbf{\tilde g}} \in {\mathbb{R}^L},{\mathbf{\tilde b}} \in {\mathbb{Z}^L} \\ 
{{\mathbf{\tilde e}}_k^T}{{\mathbf{\tilde G}}}{\mathbf{\tilde e}}_k \leq {{\mathbf{\tilde b}}^T}{{\mathbf{\tilde G}}}{\mathbf{\tilde b}} \\
{\left\| {\mathbf{\tilde b}} \right\|^2} \leqslant 1 + {\left\| {\mathbf{\tilde g}} \right\|^2} }}
\mspace{-10mu} \left\| {\mathbf{\tilde g}} \right\| ,
\end{equation}
where
\begin{equation}\label{eq:G_mat_tilde}
{{\mathbf{\tilde G}}} \triangleq \left( {{\mathbf{I}} - \frac{{{{\mathbf{\tilde g}}}{{\mathbf{\tilde g}}}^T}}{{1 + {{\left\| {{{\mathbf{\tilde g}}}} \right\|}^2}}}} \right).
\end{equation}
By defining ${\mathbf{g}} \triangleq {{\mathbf{U}}^{ - 1}}{\mathbf{\tilde g}} = {{\mathbf{U}}^T}{\mathbf{\tilde g}} \in {\mathbb{R}^L}$ and ${\mathbf{b}} \triangleq {{\mathbf{U}}^{ - 1}}{\mathbf{\tilde b}} = {{\mathbf{U}}^T}{\mathbf{\tilde b}} \in {\mathbb{Z}^L}$, which leads to ${\mathbf{\tilde g}} = {\mathbf{Ug}}$, ${\mathbf{\tilde b}} = {\mathbf{Ub}}$, $\left\| {{\mathbf{\tilde g}}} \right\| = \left\| {\mathbf{g}} \right\|$ and $\| {{\mathbf{\tilde b}}} \| = \left\| {\mathbf{b}} \right\|$, we have
\begin{equation}
\begin{aligned}
{\mathbf{\tilde e}}_k^T{\mathbf{\tilde G}}{{{\mathbf{\tilde e}}}_k} &= {\left\| {{{{\mathbf{\tilde e}}}_k}} \right\|^2} - \frac{{{{\left| {{{{\mathbf{\tilde g}}}^T}{{{\mathbf{\tilde e}}}_k}} \right|}^2}}}{{1 + {{\left\| {{\mathbf{\tilde g}}} \right\|}^2}}}\\
& = {\left\| {{{\mathbf{e}}_k}} \right\|^2} - \frac{{{{\left| {{{\mathbf{g}}^T}{{\mathbf{U}}^T}{\mathbf{U}}{{\mathbf{e}}_k}} \right|}^2}}}{{1 + {{\left\| {{\mathbf{Ug}}} \right\|}^2}}} \\
&= {\left\| {{{\mathbf{e}}_k}} \right\|^2} - \frac{{{{\left| {{{\mathbf{g}}^T}{{\mathbf{e}}_k}} \right|}^2}}}{{1 + {{\left\| {\mathbf{g}} \right\|}^2}}}\\
& = {{\mathbf{e}}_k}^T{\mathbf{G}}{{\mathbf{e}}_k}
\end{aligned}
\end{equation}
and, in a similar way, ${{{\mathbf{\tilde b}}}^T}{\mathbf{\tilde G\tilde b}} = {{\mathbf{b}}^T}{\mathbf{Gb}}$. 
This yields that minimization in~(\ref{eq:gmin_am2}) is the same as the one in~(\ref{eq:gmin_am}) and this proves the lemma.\qed
\end{proof}\par
As an example, ${\mathbf{e}}_{25}={[2,7]}^T$ is a permutation of ${\mathbf{e}}_{24}={[7,2]}^T$, in Table~\ref{tb:gmin}, and hence we have $\left\| {{{\mathbf{g}}_{min}}\left( {{{\mathbf{e}}_{25}}} \right)} \right\| = \left\| {{{\mathbf{g}}_{min}}\left( {{{\mathbf{e}}_{24}}} \right)} \right\|$.

\subsection{Simplified Compute-and-Forward (Proposed Scheme)}
As mentioned in the previous subsection, the idea of our method is based on limiting the search set of the optimization problem. To describe our simplified scheme, in Fig.~\ref{fig:PrSel} we have plotted the probability of selecting different ECVs in a relay, employing optimum CMF method, versus average SNR , as an example, for the case of equal transmission powers, i.e. $P_1=P_2=P$, in Rayleigh fading channel. From this figure, for SNRs less than 8 dB, the ECV of the optimum CMF is in the set $\left\{ {{{[1,0]}^T},{{[0,1]}^T},{{[1,1]}^T}} \right\}$  with a high probability. Likewise, for SNRs less than 16 dB, the ECV of the optimum CMF scheme is in the set $\left\{ {{{[1,0]}^T},{{[0,1]}^T},{{[1,1]}^T},{{[2,1]}^T},{{[1,2]}^T}} \right\}$ with a high probability. These results are consistent with the results provided in Table~\ref{tb:gmin} as well. When the average SNR (of each user) is less than 8 dB, the average instantaneous sum SNR, i.e. ${\left\| {{{\mathbf{g}}_m}} \right\|^2}$, is less than 12.7. Thus, according to the Table~\ref{tb:gmin}, the first three ECVs can be selected as an optimum solution with a high probability. In a similar manner, for average SNR (of each user) less than 16 dB, the average instantaneous sum SNR is less than 79.6. Thus, based on the Table~\ref{tb:gmin}, the first five ECVs cab be selected with a high probability. That is Table~\ref{tb:gmin} confirms Fig.~\ref{fig:PrSel}.\par

From the above observations, which is true for the other examples considered as well, we propose a simplified CMF, in which the search set for finding the proper values of ECV is finite and its size can be properly selected based on the channel received SNR (using Table~\ref{tb:gmin}). That is
\begin{equation}\label{eq:Opt_sim}
 {{\mathbf{a}}_m} = \arg \min_{\substack{{\mathbf{a}} \in {\mathbf{S}}_K\\{\mathbf{a}} \ne {\mathbf{0}}}} {{\mathbf{a}}^T}{{\mathbf{G}}_m}{\mathbf{a}},
\end{equation}
where ${\mathbf{S}}_K$ is the ECV search set including $K$ vectors. For instance, for the example considered, we can select

\begin{equation}
\begin{aligned}
  {{\mathbf{S}}_2}& = \left\{ {{{\mathbf{e}}_1},{{\mathbf{e}}_2}} \right\},\\
  {{\mathbf{S}}_3}& = \left\{ {{{\mathbf{e}}_1},{{\mathbf{e}}_2},{{\mathbf{e}}_3}} \right\},\\
  {{\mathbf{S}}_5}& = \left\{ {{{\mathbf{e}}_1},{{\mathbf{e}}_2},{{\mathbf{e}}_3},{{\mathbf{e}}_4},{{\mathbf{e}}_5}} \right\},
\end{aligned}
\end{equation}
and so on, where ${\mathbf{e}}_1={[1,0]}^T$, ${\mathbf{e}}_2={[0,1]}^T$, ${\mathbf{e}}_3={[1,1]}^T$, ${\mathbf{e}}_4={[2,1]}^T$, and ${\mathbf{e}}_5={[1,2]}^T$ are derived from Table~\ref{tb:gmin}.\par
We denote the simplified CMF, with the ECV search set ${\mathbf{S}}_K$, by CMF($K$). In fact, CMF($\infty$) is equivalent to the optimum CMF. In CMF($K$), the complexity of finding ECV is in the order of $KM$, which is significantly less than that of the optimum CMF.\par
It is noteworthy to mention that since CMF($K$) limits the search region of ECV selection, the outage Probability of CMF($K$)  upper bounds that of the optimum CMF, for all values of $K$.\par

\begin{figure}[t!]
\renewcommand{\figurename}{Fig.}
\centering
\includegraphics[width = \columnwidth]{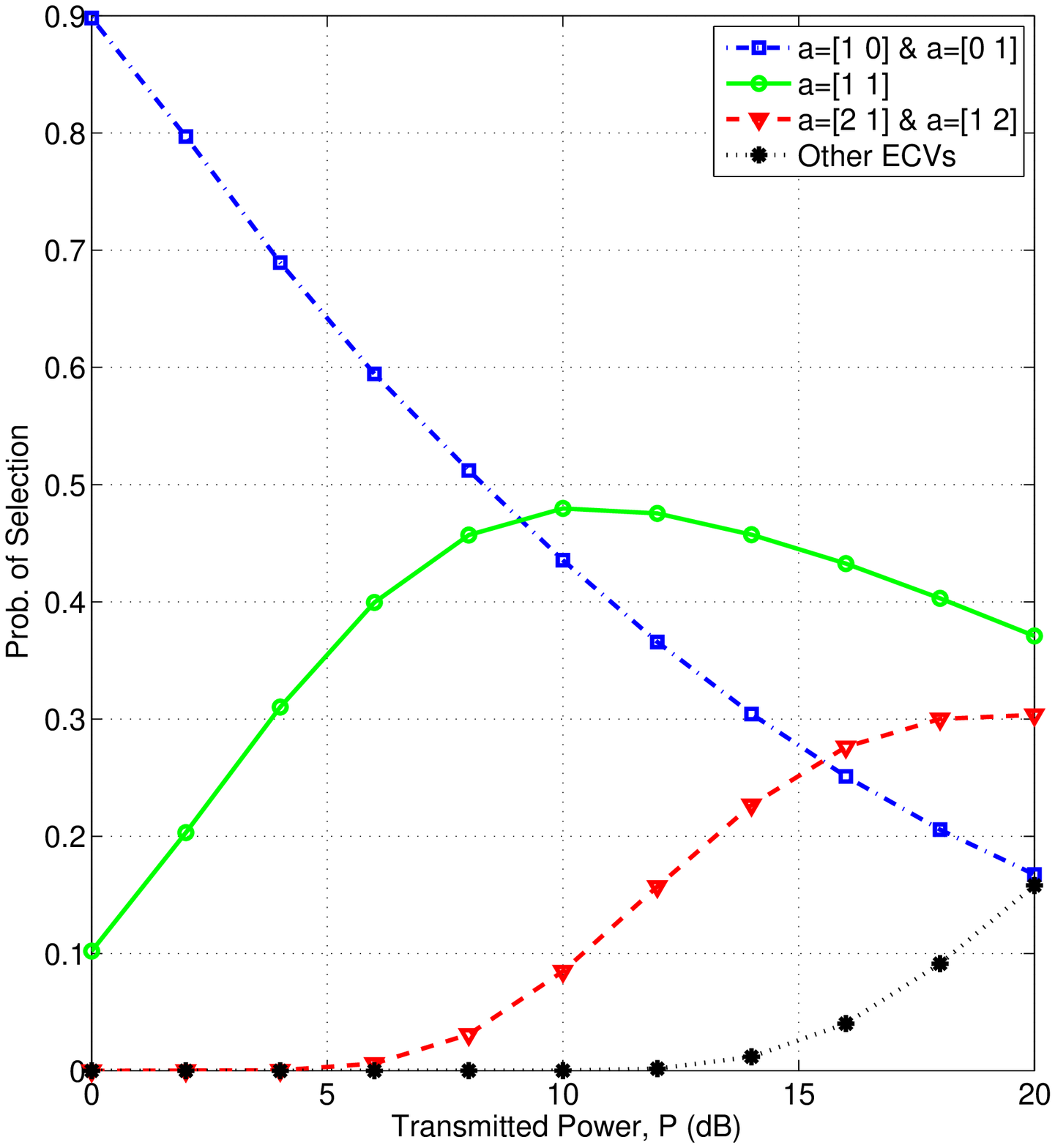}
\caption{Probability of selecting different ECVs in a relay for optimum CMF method versus average SNR ($P_1=P_2=P$).}
\label{fig:PrSel}
\end{figure}

\section{PERFORMANCE ANALYSIS} \label{sec:analysis}
In this section, we present the exact end-to-end performance analysis of CMF($K$) in terms of the outage probability, which provides an upper bound for the outage probability of the optimum CMF method.\par

\subsection{Probability of selecting an ECV in a relay}
From (3) and (6), CMF($K$) selects ECV ${\mathbf{e}}_k$, for $k=1,2,…,K$ , if and only if $R({{\mathbf{g}}_m},{{\mathbf{e}}_k}) \geqslant R({{\mathbf{g}}_m},{{\mathbf{e}}_l}),\forall l \ne k$. Thus the probability of selecting ${\mathbf{e}}_k$ is equal to the Probability that ${\mathbf{g}}_m$ is in the region $D_k$ defined as
\begin{equation}\label{eq:Dk_def}
\begin{aligned}
{D_k} &\triangleq \left\{ {{\mathbf{g}}|R({\mathbf{g}},{{\mathbf{e}}_k}) \geqslant R({\mathbf{g}},{{\mathbf{e}}_l}),\forall l \ne k} \right\}\\
&= \left\{ {{\mathbf{g}}|{{\mathbf{e}}_k}^T{\mathbf{G}}{{\mathbf{e}}_k} \leqslant {{\mathbf{e}}_l}^T{\mathbf{G}}{{\mathbf{e}}_l},\forall l \ne k} \right\},k = 1, \ldots ,K,
\end{aligned}
\end{equation}
where $\mathbf{G}$ is defined in~(\ref{eq:G_mat}). This probability can be computed as
\begin{equation}\label{eq:PkSelDef}
\begin{aligned}
P_k^{Sel} &\triangleq \Pr \left\{ {{{\mathbf{a}}_m} = {{\mathbf{e}}_k}} \right\}=\Pr \left\{ {{{\mathbf{g}}_m} \in {D_k}} \right\}\\
&= \iint\limits_{{{\mathbf{g}}_m} \in {D_k}} \mspace{-10mu} {f({g_{m1}},{g_{m2}}) \, d{g_{m1}}d{g_{m2}}} \, ,k = 1, \ldots ,K,
\end{aligned}
\end{equation}
where $f({g_{m1}},{g_{m2}})$ is the joint probability distribution of $g_{m1}$ and $g_{m2}$. Sine $g_{m1}$ and $g_{m2}$ are independent, we have
\begin{equation}
f({g_{m1}},{g_{m2}}) = \frac{{2{g_{m1}}}}{{{P_1}}}{e^{ - \frac{{{g_{m1}}^2}}{{{P_1}}}}}.\frac{{2{g_{m2}}}}{{{P_2}}}{e^{ - \frac{{{g_{m2}}^2}}{{{P_2}}}}}.
\end{equation}\par
Region $D_k$ is calculated from~(\ref{eq:Dk_def}) by computing the borders ${{\mathbf{e}}_k}^T{\mathbf{G}}{{\mathbf{e}}_k} = {{\mathbf{e}}_l}^T{\mathbf{G}}{{\mathbf{e}}_l},\forall l \ne k$, which are in general, in terms of $g_1$ and $g_2$, hyperbolas.   As an example, Fig.~\ref{fig:Regions} shows the selection regions $D_k$ for CMF(5). Note that determining the regions $D_k, k=1,\cdots,K$ is not required for finding  ${\mathbf{a}}_m$, just required for the performance analysis. ECV , i.e. ${\mathbf{a}}_m$, is found simply by calculating and comparing the terms $\left\{ {{{\mathbf{e}}_k}^T{{\mathbf{G}}_m}{{\mathbf{e}}_k},k = 1, \cdots ,K} \right\}$ according to~(\ref{eq:Opt_sim}).
\par

\begin{figure}[t]
\renewcommand{\figurename}{Fig.}
\centering
\includegraphics[width = \columnwidth]{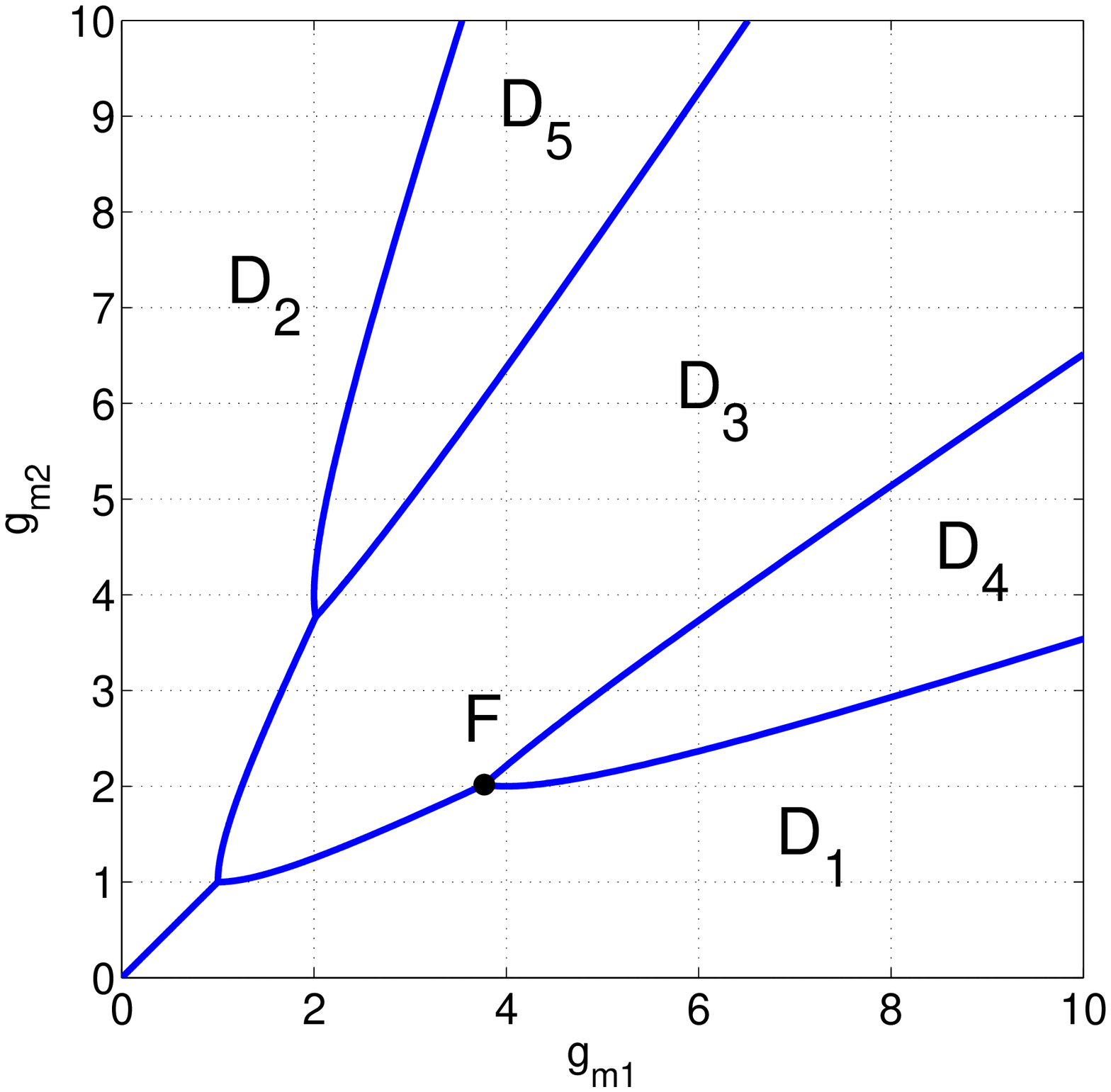}
\caption{Selection regions Dk for CMF(5).}
\label{fig:Regions}
\end{figure}

\subsection{The outage probability of each relay}
The outage Probability of a relay is the probability that the relay computation rate, i.e. $R({{\mathbf{g}}_m},{{\mathbf{a}}_m})$, is less than a given target rate $R_t$, and can be expressed as
\begin{equation}
P_{relay}^{Out} = \Pr \left\{ {R({{\mathbf{g}}_m},{{\mathbf{a}}_m}) < \,{R_t}} \right\}.
\end{equation}\par
Using the law of total probability, we can write
\begin{equation}\label{eq:PoutRelay2}
\footnotesize
\begin{aligned}
  P_{relay}^{Out} &=\sum\limits_{k = 1}^K {\Pr \left\{ {{{\mathbf{g}}_m} \in {D_k}} \right\}.\Pr \left\{ {R({{\mathbf{g}}_m},{{\mathbf{a}}_m}) < {R_t}|{{\mathbf{g}}_m} \in {D_k}} \right\}}\\
&\mathop  = \limits^{(a)}  \sum\limits_{k = 1}^K {P_k^{Sel}.\Pr \left\{ {R({{\mathbf{g}}_m},{{\mathbf{e}}_k}) < {R_t},\,{{\mathbf{g}}_m}\in {D_k}} \right\}},
\end{aligned} 
\end{equation}
where $P_k^{Sel}$ is defined in (\ref{eq:PkSelDef}) and $(a)$ holds since if ${\mathbf{g}}_m \in {D_k}$ then we have ${{\mathbf{a}}_m}={{\mathbf{e}}_k}$. We define conditional outage probability as
\begin{equation}\label{eq:Pout_ek_Def}
{P_{Out|{{\mathbf{e}}_k}}} \triangleq \Pr \left\{ {R({{\mathbf{g}}_m},{{\mathbf{e}}_k}) < {R_t}|{{\mathbf{g}}_m} \in {D_k}} \right\}.
\end{equation}\par
Using (\ref{eq:PoutRelay2}) and (\ref{eq:Pout_ek_Def}) yields
\begin{equation}
P_{relay}^{Out} = \sum\limits_{k = 1}^K {P_k^{Sel}.{P_{Out|{{\mathbf{e}}_k}}}}.
\end{equation}\par
By defining region
\begin{equation}
\begin{aligned}
{O_k} &\triangleq \left\{ {{\mathbf{g}}|R({\mathbf{g}},{{\mathbf{e}}_k}) < {R_t}} \right\}\\
&= \left\{ {{\mathbf{g}}|{{\mathbf{e}}_k}^T{\mathbf{G}}{{\mathbf{e}}_k} > \,\frac{1}{{{2^{2{R_t}}}}}} \right\}\,,k = 1, \ldots ,K, 
\end{aligned} 
\end{equation}
the conditional outage probability  ${P_{Out|{{\mathbf{e}}_k}}}$ can be easily calculated as
\begin{equation}\label{eq:Pout_ek}
\begin{aligned}
  {P_{Out|{{\mathbf{e}}_k}}} &= \Pr \left\{ {{{\mathbf{g}}_m} \in {O_k}|{{\mathbf{g}}_m} \in {D_k}} \right\}\\
&= \frac{{\Pr \left\{ {{{\mathbf{g}}_m} \in {O_k} \cap {D_k}} \right\}}}{{\Pr \left\{ {{{\mathbf{g}}_m} \in {D_k}} \right\}}} \\
 &= \frac{{\iint\limits_{{{\mathbf{g}}_m} \in {O_k} \cap {D_k}} \mspace{-25mu} {f({g_{m1}},{g_{m2}}) \, d{g_{m1}}d{g_{m2}}}}}{{P_k^{Sel}}}\,. 
\end{aligned} 
\end{equation}\par
Region $O_k$ is the area within the border ${{{\mathbf{e}}_k}^T{\mathbf{G}}{{\mathbf{e}}_k} = {{{2^{-2{R_t}}}}}}$, which can be easily shown that, in terms of $g_1$ and $g_2$, it is a hyperbola.

\subsection{Probability of rank failure}
As stated before, the destination after receiving $M$ equations from the relays attempts to find and solve the two best independent equations, i.e. the two equations with the highest computation rates, to recover the both messages . Since the relays select their ECVs independently, the equations are not necessarily independent. If the destination fails to find two independent equations, a rank failure occurs and the destination cannot recover both transmitted messages. This event causes an outage as well.\par
A rank failure occurs if all the $M$ equations are linearly dependent. In a two-dimensional space, two vectors $\mathbf{a}$ and $\mathbf{b}$ are linearly dependent only when they are in the same direction, i.e. ${\mathbf{a}}=\alpha {\mathbf{b}}, \alpha \in \mathbb{R}$. From Lemma~\ref{lem:3} and the fact that ECVs are integer vectors, two ECVs are linearly dependent only when they are equal. Therefore, a rank failure occurs at the destination if all the received equations are the same. Thus for CMF($K$), we can compute the probability of rank failure as
 \begin{equation}
\begin{aligned}
  P_{fail} &= \sum\limits_{k = 1}^K {\Pr \left\{ {{{\mathbf{a}}_m} = {{\mathbf{e}}_k}\,,\forall m = 1, \ldots ,M} \right\}} \\
  &\mathop  = \limits^{(a)} \sum\limits_{k = 1}^K {\Pr {{\left\{ {{{\mathbf{a}}_m} = {{\mathbf{e}}_k}\,} \right\}}^M}} \\
   &= \sum\limits_{k = 1}^K {\left( {P_k^{Sel}} \right)}
\end{aligned} 
\end{equation}
where $P_k^{Sel}$ is defined in~(\ref{eq:PkSelDef}) and $(a)$ follows from the fact that relays select their ECVs independently.\par
It is useful to find a lower bound for the overall outage probability of the system. As stated above, a rank failure, with probability one causes an outage in the system. Hence, we can write
\begin{equation}\label{eq:pout_fail}
\begin{aligned}
P_{out}^{sys} &= {{P} _{fail}} + \left( {1 - {{P} _{fail}}} \right).P_{out|No\,fail}^{sys}\\
&\geq {P} _{fail}
\end{aligned}
\end{equation}
where $P_{out}^{sys}$ is the overall outage probability of the system and $P_{out|No\,fail}^{sys}$ is the overall outage probability of the system conditioned on no rank failure. A direct result of the above equation is that $P_{fail}$ lower bounds the overall outage probability of the system. \par
It is noteworthy that~(\ref{eq:pout_fail}) is true for the optimum CMF method as well (with a different probability of rank failure $P_{fail}$). Hence the probability of rank failure imposes a lower bound on the outage probability of the optimum CMF method as well. \par

\subsection{The outage probability of the system}
The destination receives $M$ equations as well as their corresponding ECVs $\left\{ {{{\mathbf{a}}_m}} \right\}_{m = 1}^M$ and the computation rates $\left\{ {{R_m}} \right\}_{m = 1}^M$. Then, the receiver selects two independent equations with the highest rates. An outage occurs at the destination if either the minimum rate of these two selected equations is less than the given target rate $R_t$ or the destination cannot find two independent ECVs among $M$ received ECVs. Hence, the system outage probability can be expressed as
\begin{equation}\label{eq:PoutSys1}
P_{out}^{sys} = \Pr \left\{{
 {\max_
 {\substack{
 1 \leqslant i,j \leqslant M,\,\,i \ne j,\\\det \left( {\left[ {{{\mathbf{a}}_i},{{\mathbf{a}}_j}} \right]} \right) \ne 0
 }}
  {\min \left( {{R_i}\,,{R_j}} \right)}
  }
 \quad< 
 {R_t} 
   }\right\}.
\end{equation}\par
The probability in the above equation includes the event of rank failure in the destination, i.e. the case that destination cannot find two independent equations. In this case, the search space of maximization in~(\ref{eq:PoutSys1}) is an empty set and hence considered as an outage event.\par
To compute the probability in ~(\ref{eq:PoutSys1}) for the proposed scheme of CMF(K), we present a strategy for selecting two independent equations (we have assumed that $L=2$) with the highest minimum rates, as follows: All the $M$ received equations are divided into $K$ sets, each having the same ECV. Please note that some of these sets can be empty. Then, the equation with the highest rate is selected from the non-empty sets. As a result, at most $K$ independent equations, each from a non-empty set, are selected. Finally, the two equation with the highest rates among these independent equations are the desired ones. In the following, we calculate~(\ref{eq:PoutSys1}) based on this strategy. \par
Define ${\mathbf{T}}_k , k=1,…,K$, as the set of received equations that their ECVs are ${\mathbf{e}}_k$. The size of the set ${\mathbf{T}}_k$ is denoted by $n_k$. From the definition of $P_k^{Sel}$ in (\ref{eq:PkSelDef}), the probability of having the size vector of $\left({n_1},{n_2},\cdots,{n_K}\right)$ is easily computed as
\begin{equation}\label{eq:P_n1_to_nk}
\Pr \left( {{n_1}, \cdots ,{n_K}} \right) = \left( {\begin{array}{*{20}{c}} 
  M \\ 
  {{n_1}, \cdots ,{n_K}} 
\end{array}} \right) \cdot \prod\limits_{k = 1}^K {{{\left( {P_k^{Sel}} \right)}^{{n_k}}}} ,
\end{equation}
where
\begin{equation}
\left( {\begin{array}{*{20}{c}}
  M \\ 
  {{n_1}, \cdots ,{n_K}} 
\end{array}} \right) \triangleq \frac{{M!}}{{{n_1}! \cdots {n_K}!}}.
\end{equation}\par
Thus, using the law of total probability, we drive
\begin{equation}\label{eq:PoutSys2}
P_{out}^{sys} = \sum\limits_{\substack{ 
  \left( {{n_1}, \cdots ,{n_K}} \right) \\ 
  \forall k:0 \leqslant {n_k} \leqslant M \\ 
  \sum\nolimits_{k = 1}^K {{n_k}}  = M }}
  {\Pr \left( {{n_1}, \cdots ,{n_K}} \right) \cdot P_{out|\left( {{n_1}, \cdots ,{n_K}} \right)}^{sys}} \,,
\end{equation}
where $P_{out|\left( {{n_1}, \cdots ,{n_K}} \right)}^{sys}$ is the outage probability given the size vector $\left({n_1},{n_2},\cdots,{n_K}\right)$. To compute the conditional outage , let ${R'_k}$ denotes the maximum computation rate of all equations in ${\mathbf{T}}_k$ . If ${\mathbf{T}}_k$ is an empty set we set ${R'_k} = 0$. Moreover, let the set $\left\{ {{R_{k,1}},{R_{k,2}}, \cdots ,{R_{k,{n_k}}}} \right\}$ include all computation rates of equations in ${\mathbf{T}}_k$. Since all relays have the same conditions, all the computation rates are \textit{i.i.d.}, and we can calculate
\begin{equation}
\begin{aligned}
  \Pr &\left\{ {{{R'}_k} < {R_t}} \right\} \\
 &= \Pr \left\{ {\max \left( {{R_{k,1}}, \cdots ,{R_{k,{n_k}}}} \right) < {R_t}} \right\} \\
 &= \Pr \left\{ {\bigcap\limits_{r = 1}^{{n_k}} {{R_{k,r}} < {R_t}} } \right\} \\
 &= \prod\limits_{r = 1}^{{n_k}} {\Pr \left\{ {{R_{k,r}} < {R_t}} \right\}}  \\
 &= {\left( {{P_{Out|{{\mathbf{e}}_k}}}} \right)^{{n_k}}},
\end{aligned}
\end{equation}•
where ${P_{Out|{{\mathbf{e}}_k}}}$  is given in (\ref{eq:Pout_ek}).\par
The conditional probability in (\ref{eq:PoutSys2}) can be then computed as (\ref{eq:PoutSys_n1_to_nk}), where ${\max _{(2)}} \left( \cdot \right)$  represents the second maximum operator and $(a)$ follows from the fact that all the computation rates are \textit{i.i.d.}.\par

\begin{equation}\label{eq:PoutSys_n1_to_nk}
\begin{split}
  P_{out|\left( {{n_1}, \cdots ,{n_K}} \right)}^{sys} &= \Pr \left\{ {{{\max }_{(2)}}\left( {{{R'}_1},{{R'}_2}, \cdots ,{{R'}_K}} \right) < {R_t}} \right\} \\
  &= \Pr \left\{ \left( {\bigcap\limits_{k = 1}^K {{{R'}_k} < {R_t}} } \right) \bigcup {\left( {\bigcup\limits_{k = 1}^K {\left( {{{R'}_k} > {R_t}\bigcap\limits_{j = 1\,,j \ne k}^K {{{R'}_j} < {R_t}} } \right)} } \right)}  \right\} \\
 &= \Pr \left\{ {\bigcap\limits_{k = 1}^K {{{R'}_k} < {R_t}} \,} \right\} + \sum\limits_{k = 1}^K {\Pr \left\{ {\,{{R'}_k} > {R_t}\bigcap\limits_{j = 1\,,j \ne k}^K {{{R'}_j} < {R_t}} } \right\}} \\
 &\mathop  = \limits^{(a)}  \prod\limits_{k = 1}^K {\Pr \left\{ {{{R'}_k} < {R_t}\,} \right\}}  + \sum\limits_{k = 1}^K {\left( {\Pr \left\{ {{{R'}_k} > {R_t}\,} \right\} \cdot \prod\limits_{j = 1\,,j \ne k}^K {\Pr \left\{ {{{R'}_j} < {R_t}\,} \right\}} } \right)}\\
 &= \prod\limits_{k = 1}^K {{{\left( {{P_{Out|{{\mathbf{e}}_k}}}} \right)}^{{n_k}}}}  + \sum\limits_{k = 1}^K {\left( {\left( {1 - {{\left( {{P_{Out|{{\mathbf{e}}_k}}}} \right)}^{{n_k}}}} \right) \cdot \prod\limits_{j = 1\,,j \ne k}^K {{{\left( {{P_{Out|{{\mathbf{e}}_j}}}} \right)}^{{n_j}}}} } \right)\,} 
\end{split}
\end{equation}


Finally, by substitution of (\ref{eq:P_n1_to_nk}) and (\ref{eq:PoutSys_n1_to_nk}) in (\ref{eq:PoutSys2}), the overall system outage probability is derived as (\ref{eq:PoutSys_f}), where  $P_k^{Sel}$ and ${P_{Out|{{\mathbf{e}}_k}}}$  are  given in (\ref{eq:PkSelDef}) and (\ref{eq:Pout_ek}), respectively.
\begin{multline}\label{eq:PoutSys_f}
  P_{out}^{sys} =
\sum\limits_{\substack{\left( {{n_1}, \cdots ,{n_K}} \right) \\ 
\forall k:0 \leqslant {n_k} \leqslant M \\ 
  \sum\nolimits_{k = 1}^K {{n_k}}  = M }}
  {\left( {\begin{array}{*{20}{c}}
  M \\ 
  {{n_1}, \cdots ,{n_K}} 
\end{array}} \right) \cdot \prod\limits_{k = 1}^K {{{\left( {P_k^{Sel}} \right)}^{{n_k}}}} }\\
  \cdot \left( {\prod\limits_{k = 1}^K {{{\left( {{P_{Out|{{\mathbf{e}}_k}}}} \right)}^{{n_k}}}}  + \sum\limits_{k = 1}^K {\left( {\left( {1 - {{\left( {{P_{Out|{{\mathbf{e}}_k}}}} \right)}^{{n_k}}}} \right) \cdot \prod\limits_{j = 1,j \ne k}^K {{{\left( {{P_{Out|{{\mathbf{e}}_j}}}} \right)}^{{n_j}}}} } \right)} } \right)
\end{multline}

\section{SIMULATION RESULTS} \label{sec:simulation}
In this section, computer simulations and analytical results are provided to study and compare optimum and simplified methods. In simulations, equal powers $P_1=P_2=P$ and target rate $R_t=0.5$ are assumed.\par
Fig.~\ref{fig:det} shows the probability of rank failure in the destination for optimum CMF, CMF(3), and CMF(5) methods versus the average SNR. Moreover, two cases of  $M=2$ and $M=6$ are compared. As it is observed, by increasing the number of relays ($M$), probability of rank failure decreases significantly. The reason is that by the increase of $M$, the destination receives more equations and hence it can find two independent equation with a higher probability. Generally, the optimum CMF shows a lower probability of rank failure compared with CMF(3) and CMF(5). Similarly, the probability of rank failure of CMF(5) is less than that of CMF(3). This is due to the fact that in CMF(5), the search set has more elements  than in CMF(3). Hence, the probability of selecting the same ECVs in different relays decreases.\par
\begin{figure}[tb]
\renewcommand{\figurename}{Fig.}
\centering
\includegraphics[width = \columnwidth]{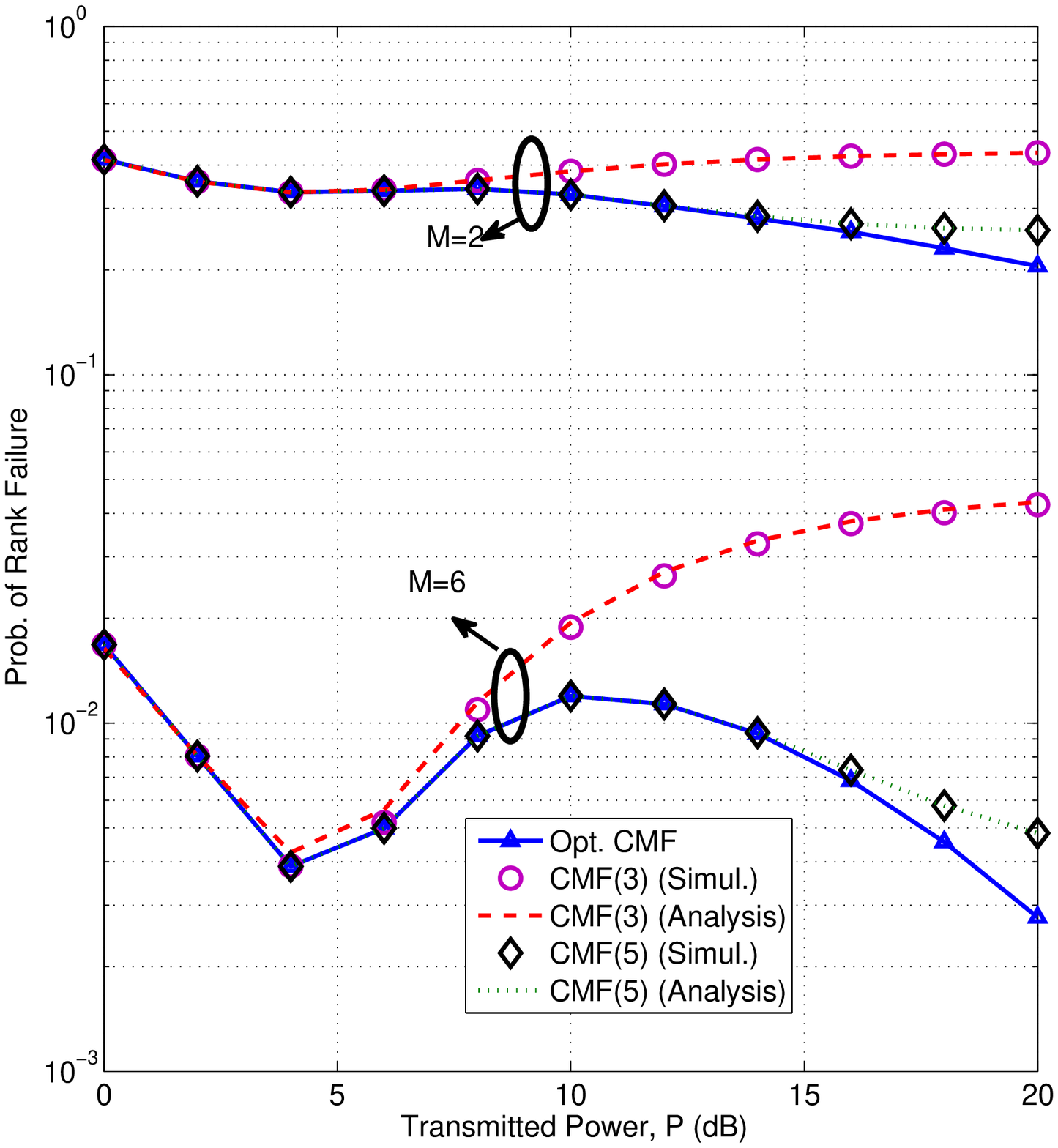}
\caption{Probability of rank failure in destination for optimum CMF, CMF(3), and CMF(5) versus average SNR ($M=2, 6$).}
\label{fig:det}
\end{figure}

Fig.~\ref{fig:PoutSys} compares the systems outage probability of optimum CMF, CMF(3), and CMF(5) versus average SNR in the cases of $M=2$ and $M=6$ relays, based on computer simulations. Analytical results are also provided for our proposed simplified schemes for the comparisons. First, as can be realized, simulation results for CMF(3) and CMF(5) well coincide with the related analytical results. From this  figure, CMF(3) and CMF(5) outage curves provide upper bounds on optimum CMF outage curves, which are closely tight in SNRs less than 6 dB and 16 dB, respectively. As it is observed, CMF(5) performs near the optimum CMF method up to a higher SNR threshold and provides a wider valid approximation region than CMF(3). \par
\begin{figure}[tb]
\renewcommand{\figurename}{Fig.}
\centering
\includegraphics[width = \columnwidth]{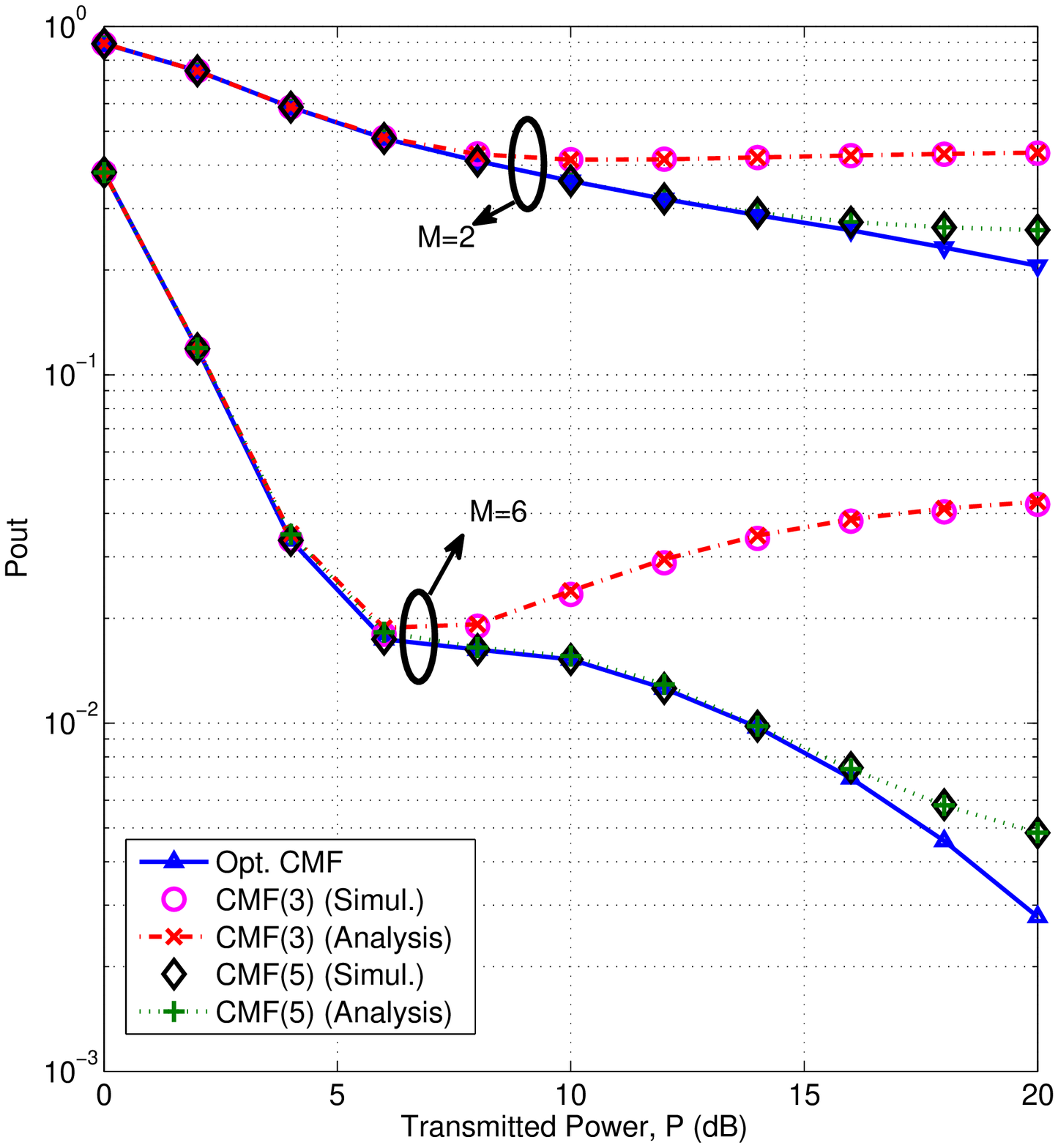}
\caption{System outage probability for optimum CMF, CMF(3), and CMF(5) versus average SNR, ($M=2, 6$).}
\label{fig:PoutSys}
\end{figure}

Fig.~\ref{fig:Pout_Pdet} demonstrates the outage probability along with the probability of rank failure versus the average SNR, for the cases of optimum CMF and CMF(5). Here, $M=6$ relays are assumed. As mentioned in Subsection~\ref{sec:analysis}-C and can be observed from this figure, probability of rank failure lower bounds the outage probability in the both cases. The lower bound is closely tight in SNRs more than 12 dB. Therefore, rank failure is a major bottleneck in the system and degrades the performance considerably, even in the optimum CMF method. From these results, by using a higher number of relays ($M>L$), the rank failure probability reduces which leads to a lower outage probability.\par
\begin{figure}[tb]
\renewcommand{\figurename}{Fig.}
\centering
\includegraphics[width = \columnwidth]{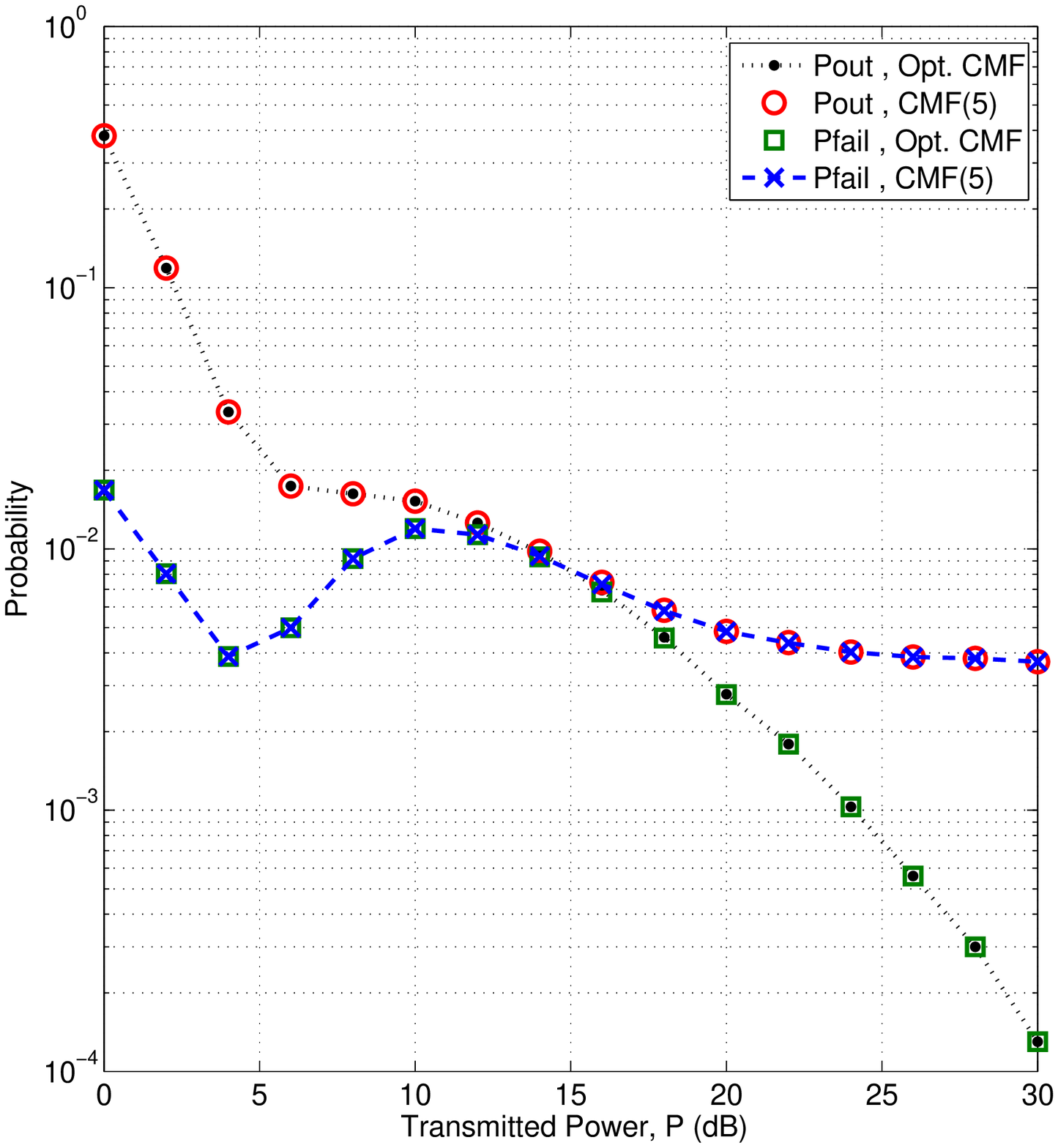}
\caption{Probabilities of outage and rank failure  for optimum CMF and CMF(5) versus average SNR, ($M=6$).}
\label{fig:Pout_Pdet}
\end{figure}

Fig.~\ref{fig:CEE} illustrates the effect of CEE on the optimum and simplified CMF methods. In this figure, the outage probabilities of optimum CMF and CMF(5) versus the average SNR are shown for different values of CEE variance $\sigma_e^2$ through computer simulations. The number of relays ($M$) is assumed to be 6. In the simulations, the true channel gain $h_{ml}$ and the estimated channel gain ${{\hat h}_{ml}}$ are computed as
\begin{align}
  {h_{ml}} &= \left| {{\gamma _{ml}}} \right|,  \\
  {{\hat h}_{ml}} &= \left| {{\gamma _{ml}} + {\sigma _e}{e_{ml}}} \right| ,
\end{align} 
where $\gamma _{ml}$ and $e_{ml}$, for $l=1,2,\, m=1,\cdots,M,$ are \textit{i.i.d} circularly-symmetric complex normal random variables with unit variance. CEE results in a noisy matrix ${\mathbf{G}}_m$ in~(\ref{eq:opt2}) and~(\ref{eq:Opt_sim}) and hence, a suboptimum ECV may be selected as the solution of these optimizations. Selection of a suboptimum ECV causes a rate loss. It can be found from this figure that the optimum CMF is more sensitive to CEE than the simplified CMF, and even in some cases shows slightly inferior performance than the simplified CMF. The reason is that optimum CMF has a larger search space, especially in high SNR values, than the simplified CMF, which has a small and fixed search space. Hence, when ${\mathbf{G}}_m$ is noisy due to CEE, it is more probable that a suboptimum ECV is selected as the solution of~(\ref{eq:opt2}) than~(\ref{eq:Opt_sim}).\par
\begin{figure}[tb!]
\renewcommand{\figurename}{Fig.}
\centering
\includegraphics[width = \columnwidth]{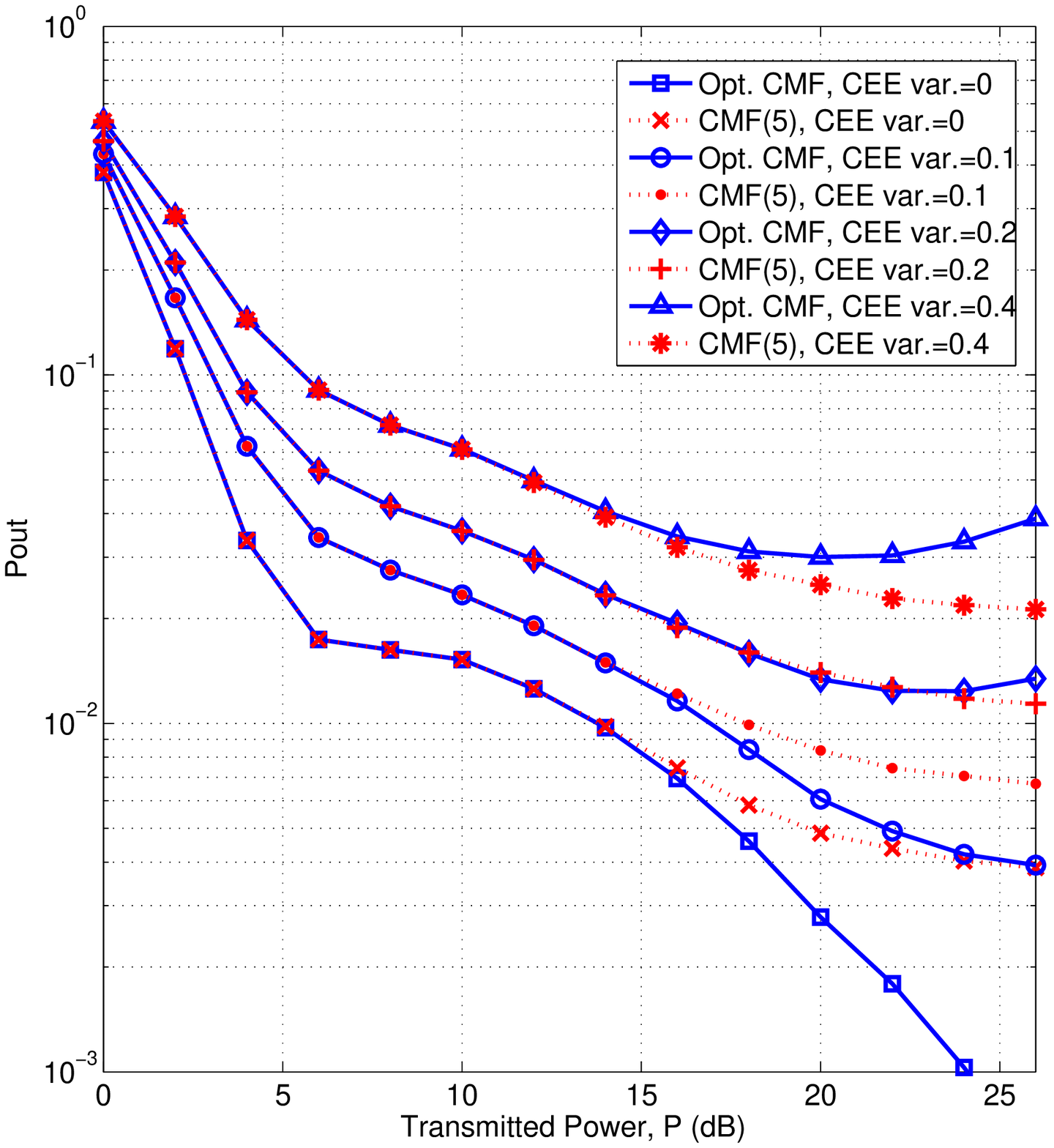}
\caption{Effect of CEE on outage probability for optimum CMF and CMF(5) versus average SNR, ($M=6$).}
\label{fig:CEE}
\end{figure}

\section{CONCLUSION} \label{sec:conclusion}
In this paper, we have proposed a simplified CMF method, called CMF($K$). Through analytical and simulation evaluations, it has been demonstrated that the CMF($K$) presents nearly the same performance as optimum CMF in low SNR regimes, where the SNR is below a certain threshold. Higher values of $K$ increase this threshold and extend the valid approximation region. The exact outage probability of CMF($K$) is derived. Here, we have considered two sources and real channels. However, the same framework can be applied to extend the results to a greater number of sources or complex cases. Our results indicates that the rank failure of the received equations at the destination is a dominate performance degradation in term of the overall system outage probability in both optimum and our proposed simplified CMF, which its effect can be reduced by having higher number of relays compared to the number of sources. Finally, computer simulations showed that simplified CMF, due to its small and fixed search space, is more robust against channel estimation error than optimum CMF for the examples considered.\par

\bibliographystyle{IEEEtr}
\bibliography{Paper2_Refs}

\end{document}